\pgfplotsset{compat=newest}
\DeclareMathOperator{\Tr}{Tr}
\newcommand{\coleq}{\mathrel{\mathop:}\nobreak\mkern-1.2mu=}
\newcommand{\eqcol}{\mkern-1.2mu=\mathrel{\mathop:}\nobreak}
\newcommand{\mc}{\mathcal}
\newcommand{\mr}{\mathrm}
\newcommand{\mbb}{\mathbb}
\newcommand{\expval}[1]{{\langle #1 \rangle}}
\newcommand{\ketbra}[2]{{\vert #1 \rangle \langle #2 \vert}}
\newcommand{\lket}[1]{\vert #1 \rangle\!\rangle}
\newcommand{\lbra}[1]{\langle\!\langle #1 \vert}
\newcommand{\lbraket}[2]{\langle\!\langle #1 \vert #2 \rangle\!\rangle}
\newcommand{\lketbra}[2]{\vert #1 \rangle\!\rangle\langle\!\langle #2 \vert}
\newcommand{\ptm}{{\mathrm{PTM}}}
\newcommand{\E}{\mathop{\mbb{E}}}
\newcommand{\pushright}[1]{\ifmeasuring@#1\else\omit\hfill$\displaystyle#1$\fi\ignorespaces}
\newtheorem{theorem}{Theorem}
\newtheorem{proposition}[theorem]{Proposition}
\newtheorem{definition}{Definition}
\definecolor{applegreen}{rgb}{0.55, 0.71, 0.0}
\newcommand{\comments}[1]{}
\newcommand{\algorithmfootnote}[2][\footnotesize]{%
  \let\old@algocf@finish\@algocf@finish%
  \def\@algocf@finish{\old@algocf@finish%
    \leavevmode\rlap{\begin{minipage}{\linewidth}
    #1#2
    \end{minipage}}%
  }%
}
\NewDocumentCommand{\LeftComment}{s m}{%
  \Statex \IfBooleanF{#1}{\hspace*{\ALG@thistlm}}\(\triangleright\) #2}
\algnewcommand{\LineComment}[1]{\Statex // #1}
\begin{document}

\title{Tight bounds on Pauli channel learning without entanglement}
\author{Senrui Chen}
\email{csenrui@uchicago.edu}
\affiliation{Pritzker School of Molecular Engineering, The University of Chicago, Chicago, Illinois 60637, USA}
\author{Changhun Oh}
\affiliation{Pritzker School of Molecular Engineering, The University of Chicago, Chicago, Illinois 60637, USA}
\affiliation{Department of Physics, Korea Advanced Institute of Science and Technology, Daejeon 34141, Korea}
\author{Sisi Zhou}
\affiliation{Institute for Quantum Information and Matter,
California Institute of Technology, Pasadena, CA 91125, USA}
\affiliation{Perimeter Institute for Theoretical Physics, Waterloo, Ontario N2L 2Y5, Canada}
\author{Hsin-Yuan Huang}
\affiliation{Institute for Quantum Information and Matter,
California Institute of Technology, Pasadena, CA 91125, USA}
\affiliation{Center for Theoretical Physics,
Massachusetts Institute of Technology, Cambridge, MA 02139, USA}
\author{Liang Jiang}
\email{liang.jiang@uchicago.edu}
\affiliation{Pritzker School of Molecular Engineering, The University of Chicago, Chicago, Illinois 60637, USA}

\date{\today}

\begin{abstract}
    Quantum entanglement is a crucial resource for learning properties from nature, but a precise characterization of its advantage can be challenging.
    In this work, we consider learning algorithms without entanglement to be those that only utilize states, measurements, and operations that are separable between the main system of interest and an ancillary system.
    Interestingly, we show that these algorithms are equivalent to those that apply quantum circuits on the main system interleaved with mid-circuit measurements and classical feedforward.
    Within this setting, we prove a tight lower bound for Pauli channel learning without entanglement that closes the gap between the best-known upper and lower bound.
    In particular, we show that $\Theta(2^n\varepsilon^{-2})$ rounds of measurements are required to estimate each eigenvalue of an $n$-qubit Pauli channel to $\varepsilon$ error with high probability when learning without entanglement.
    In contrast, a learning algorithm with entanglement only needs  $\Theta(\varepsilon^{-2})$ copies of the Pauli channel.
    The tight lower bound strengthens the foundation for an experimental demonstration of entanglement-enhanced advantages for Pauli noise characterization.
\end{abstract}

\maketitle

Entanglement lies at the heart of quantum information science and technology, providing significant advantages over classical information processing in computation~\cite{nielsen2002quantum}, communication~\cite{gisin2007quantum, kimble2008quantum}, metrology~\cite{giovannetti2006quantum, giovannetti2011advances, polino2020photonic}, and many other aspects.
A recent line of research uses information-theoretic tools to obtain rigorous and exponential quantum advantages in learning~\cite{huang2022quantum,chen2022exponential,bubeck2020entanglement,huang2021information,chen2022tight,aharonov2022quantum,caro2022learning,chen2022quantum,chen2023complexity}. It is shown, both theoretically and experimentally, that quantum resources %
can bring significant speed-up for learning certain properties from the nature, \textit{e.g.}, learning expectation values of many observables for a quantum state~\cite{huang2022quantum}.
However, the connection between these quantum advantages with specific quantum resources, \textit{e.g.}, quantum entanglement, is far from clear. 
This problem is prominent for learning properties from quantum channels, where there are many different ways of defining a ``quantum-enhanced'' experiments~\cite{aharonov2022quantum,huang2021information,chen2022exponential,chen2022quantum}, depending on whether one allows ancillary systems, concatenation of channels, mid-circuit controls, \textit{etc}. 
A scenario that precisely captures the role of quantum entanglement in learning is under exploration.

Apart from studying learning schemes from a quantum resource-theoretic perspective~\cite{chitambar2019quantum}, one can also take an operational approach. Specifically, a class of quantum operations known as mid-circuit measurement and classical feedforward have drawn increasing attention recently. While they are important building blocks of fault-tolerant quantum computation~\cite{shor1996fault,gottesman1998theory} and have found applications in recent experiments~\cite{iqbal2023topological,singh2023mid}, a framework that systematic study the effectiveness of mid-circuit measurements and classical feedforward in learning is yet to be established.

Turning to concrete learning tasks, a class of quantum channels that has drawn particular interest is the Pauli channel, which is defined to be a stochastic mixture of (multi-qubit) Pauli operations~\cite{nielsen2002quantum}. 
The Pauli channel is not only a basic model in quantum information theory, but also plays a crucial role in characterizing noisy quantum systems, with applications in quantum benchmarking~\cite{erhard2019characterizing,carignan2023error,harper2020efficient}, quantum noise mitigation~\cite{van2023probabilistic,Ferracin2022Efficiently,kim2023evidence}, quantum error correction~\cite{tuckett2018ultrahigh}, etc.
Techniques such as randomized compiling can engineer general quantum noise into Pauli channel under realistic assumptions~\cite{wallman2016noise,hashim2020randomized}.
A prerequisite for many of these applications is to learn an unknown Pauli channel.
Therefore, it is natural to study the protocols and limitations of Pauli channel learning. There have been several recent works exploring this direction~\cite{chen2022quantum,flammia2020efficient,flammia2021pauli,fawzi2023lower}. Specifically, Ref.~\cite{chen2022quantum} studies the sample complexity of Pauli channel estimation using information-theoretic methods, and shows an exponential separation between using and not using ancilla for learning every eigenvalue of an $n$-qubit Pauli channel to $\pm\varepsilon$ precision. 
However, as shown by the current work, the ancilla-free lower bound given there, $\Omega(2^{n/3})$, was not tight. While an ideal ancilla-assisted protocol has sample complexity $\Theta(1/\varepsilon^2)$\footnote{
There is an additional factor of $n$ in the upper bound of \cite{chen2022quantum} because the task there is to estimate all $\lambda_a$ simultaneously with high probability. Here we only require estimating each $\lambda_a$ individually with high probability.
One can derive the former using the latter via the union bound and paying a factor of $n$ in sample complexity.
}, real-world imperfections such as state preparation and measurement (SPAM) noise can introduce a weak exponential sampling overhead (see Fig.~\ref{fig:performance}). Therefore, to establish an advantage with moderate system sizes and realistic levels of imperfection, it is highly desirable to tighten the ancilla-free lower bound.

\begin{figure}[t]
    \centering
    \includegraphics[width=\columnwidth]{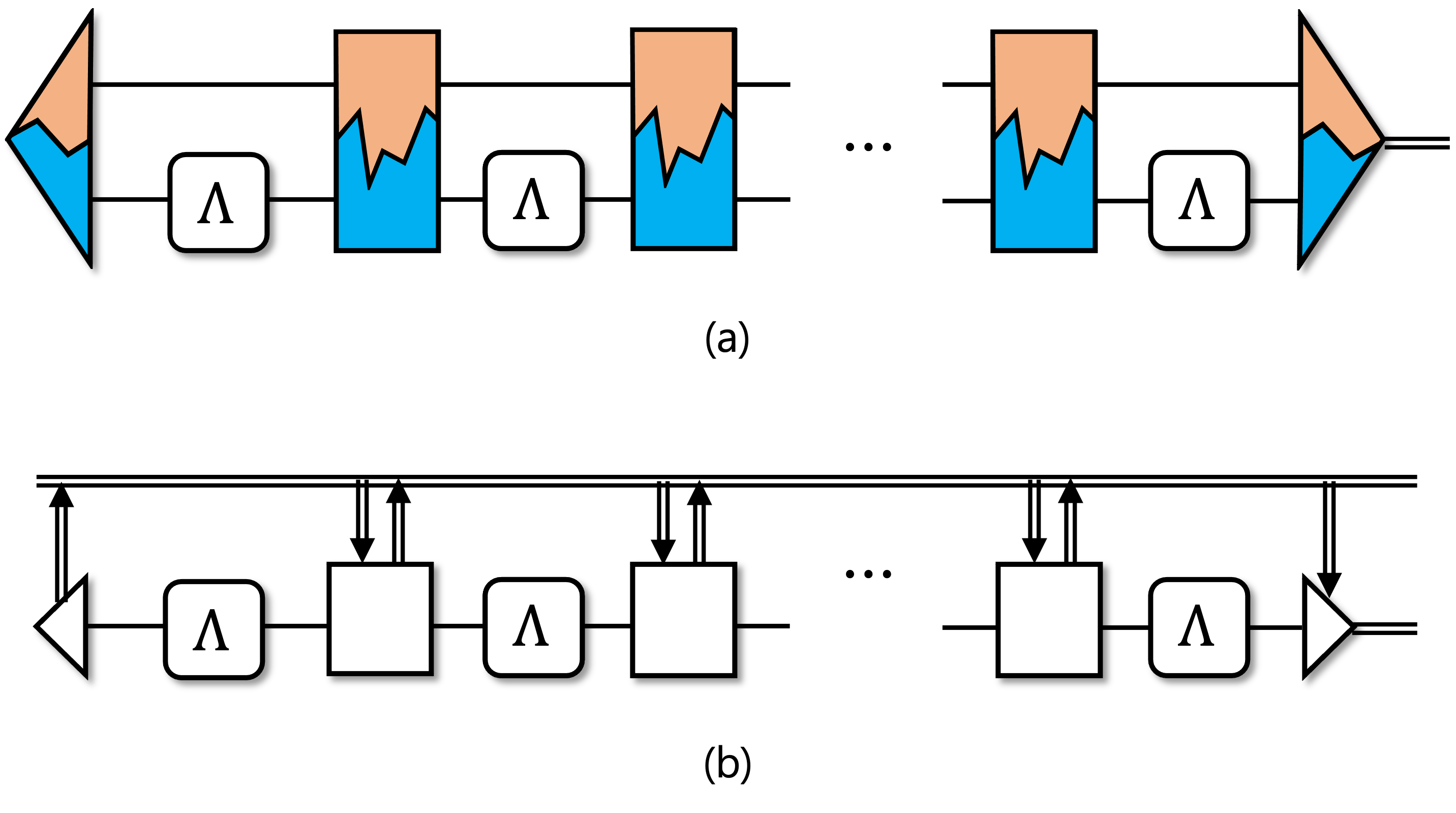}
    \caption{(a) Separable schemes. The two different colors indicate the operations are separable. 
    (b) Classical-memory-assisted schemes. The double line represents classical registers. The square box represents adaptively-chosen (arrows coming from the classical registers) quantum instruments (outcomes sent to the classical registers). We will show the two schemes are equivalent in terms of sample complexity, so we call both entanglement-free schemes.}
    \label{fig:schematics}
\end{figure}

In this work, we introduce a class of learning schemes that do not exploit entanglement between the main system and the ancillary system, from a resource-theoretic perspective. We also introduce a class of schemes that describe quantum circuits assisted with mid-circuit measurement and classical feedforward, from an operational perspective. (See Fig.~\ref{fig:schematics}.) Perhaps surprisingly, we show the two schemes are equivalent in terms of sample complexity for any learning tasks. 
This provides a new operational interpretation for quantum entanglement as a resource~\cite{bennett1996concentrating,bennett1996mixed,vedral1997quantifying}.
We then show that information-theoretic methods can be used to prove sample complexity lower bound in the above scenario. For the task of Pauli channel learning mentioned above, we obtain a tight lower bound of $\Omega(2^n/\varepsilon^2)$, closing the cubic gap with the known upper bound, and providing a tight exponential separation with the entanglement-assisted bound of $\Theta(1/\varepsilon^2)$~\cite{chen2022quantum}. 
We show that, this separation persists even if the entanglement-assisted scheme suffers from a reasonable amount of realistic noise.
Finally, we show how our results impose an limit on the efficiency of characterizing gate-dependent Pauli noise channels.

\medskip
\section{Setup}
Consider the task of learning properties of an $n$-qubit quantum channel $\Lambda$ from certain family by querying multiple copies of the channel.
We start by defining entanglement-free learning schemes. 
We introduce the class of \emph{separable schemes}, where a main system $\mc H_S$ and an ancillary system $\mc H_A$ are given. The main system is where $\Lambda$ acts on and has a fixed dimension of $2^n$, while the ancillary system can be arbitrarily large. Now, a separable scheme allows interleaving copies of $\Lambda$ on $\mc H_S$ with any processing operations (including state preparation, measurement, and quantum channels) on $\mc H_S\otimes\mc H_A$, with the only restriction that all the processing operations are separable~\cite{cirac2001entangling}  across $\mc H_A$ and $\mc H_S$. A schematic for separable schemes is shown in Fig.~\ref{fig:schematics} (a). 
The separable operations are known to be the largest set of operations that do not generate entanglement from un-entangled states (even when acting only on a subspace)~\cite{cirac2001entangling}, and is thus a suitable model for entanglement-free strategies from a quantum-resource-theoretic~\cite{chitambar2019quantum} point of view. 
Furthermore, separable operations contain as a subset other physically-motivated classes of operations like LOCC (local operation and classical communications)~\cite{bennett1993teleporting,chitambar2014everything}, 
which is also a standard choice of free operation in the resource theory of entanglement~\cite{bennett1996concentrating,bennett1996mixed,vedral1997quantifying}. 
A lower bound on sample complexity for the former implies a lower bound for the latter.

Besides separable schemes, we introduce another operationally motivated class of schemes called \emph{classical-memory-assisted schemes}. Here, one can only access the main system $\mc H_S$ (with a fixed dimension of $2^n$) and arbitrarily many classical registers. The allowed operations are to interleave copies of $\Lambda$ with adaptively-chosen \emph{quantum instruments}, which are defined as quantum channels associated with outputs to classical registers. By ``adaptively'' we mean the quantum instruments can be chosen according to the classical registers. A schematic is given in Fig.~\ref{fig:schematics}~(b). 
One can think of such schemes as quantum circuits assisted by mid-circuit measurement and classical feedforward control.
We remark that the classical-memory-assisted schemes include the ancilla-free concatenating scheme introduced in~\cite{chen2022quantum} as a special case, which can describe most randomized benchmarking (RB) ~\cite{emerson2005scalable,knill2008randomized} type protocols. The lower bounds obtained in this work thus also holds for those protocols.

While the above two schemes are introduced with different motivations, perhaps surprisingly, they are equivalent in terms of sample complexity. We have the following result. 
\begin{proposition}\label{prop:equiv}
For any separable scheme $A$, there exists a classical-memory-assisted scheme $B$ that generates the same outcome distribution as $A$ for any underlying $\Lambda$ using the same number of copies. Vice versa.
\end{proposition}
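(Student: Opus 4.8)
The plan is to prove the two implications separately. Throughout I will use the structure theorem for non-entangling operations: a separable channel has all Kraus operators of product form $A_i\otimes B_i$, a separable POVM has elements $M_j=\sum_i M^S_{j,i}\otimes M^A_{j,i}$ with $M^S_{j,i},M^A_{j,i}\succeq 0$, and consequently separable operations send separable states to separable states. The direction ``classical-memory-assisted $\Rightarrow$ separable'' is the easy one: given such a scheme, I realize its classical registers inside $\mc H_A$ as a subsystem carrying only computational-basis states $\ket{m}_A$, and simulate an adaptively chosen instrument $\{\mathcal I^{(m)}_x\}$ acting on $\mc H_S$ (with Kraus operators $K^{(m)}_{x,\ell}$) by the channel on $\mc H_S\otimes\mc H_A$ with Kraus operators $\{K^{(m)}_{x,\ell}\otimes\ketbra{(m,x)}{m}\}$; these are product operators, so the channel is separable, and trace preservation follows from $\sum_{x,\ell}(K^{(m)}_{x,\ell})^\dagger K^{(m)}_{x,\ell}=I_S$. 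The queries to $\Lambda$ act on $\mc H_S$ unchanged, and the final register readout becomes a computational-basis measurement of $\mc H_A$. This reproduces the output distribution using the same number of copies.

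For the substantive direction, ``separable $\Rightarrow$ classical-memory-assisted'', I will exploit that since every processing operation in the separable scheme $A$ is non-entangling, the global state just before any query to $\Lambda$ is separable, $\rho_{SA}=\sum_k p_k\,\rho_S^{(k)}\otimes\rho_A^{(k)}$. The simulating scheme $B$ maintains the invariant that it has sampled a branch label $k\sim p_k$, holds $\rho_S^{(k)}$ on $\mc H_S$, and stores in its classical memory a description of $(k,\rho_A^{(k)})$ --- or, to avoid storing arbitrary density matrices, simply the classical transcript of all outcomes sampled so far, from which $\rho_A^{(k)}$ is recomputed. A query to $\Lambda$ is simulated by applying $\Lambda$ to $\mc H_S$ and touching nothing else; this is the only place the copy count enters, and it matches. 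A separable processing operation with product Kraus operators $\{A_i\otimes B_i\}$ is simulated, conditioned on the memory value $k$, by the quantum instrument on $\mc H_S$ with CP maps $\mathcal J^{(k)}_i(\sigma)=\tr(B_i^\dagger B_i\rho_A^{(k)})\,A_i\sigma A_i^\dagger$; this is a legitimate instrument precisely because partial-tracing the completeness relation $\sum_i A_i^\dagger A_i\otimes B_i^\dagger B_i=I\otimes I$ against $I_S\otimes\rho_A^{(k)}$ yields $\sum_i\tr(B_i^\dagger B_i\rho_A^{(k)})A_i^\dagger A_i=I_S$. On outcome $i$, scheme $B$ updates to branch $(k,i)$ with ancilla state $B_i\rho_A^{(k)}B_i^\dagger/\tr(B_i^\dagger B_i\rho_A^{(k)})$, and one checks the induced branch probabilities and conditional state of $\mc H_S$ coincide with those of $\mathcal E(\rho_{SA})$. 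A separable instrument with a classical output is handled the same way, appending its outcome to the transcript, and a final separable POVM $\{M_j=\sum_i M^S_{j,i}\otimes M^A_{j,i}\}$ is simulated by measuring $\mc H_S$ with the $k$-dependent POVM $\{\tr(M^A_{j,i}\rho_A^{(k)})\,M^S_{j,i}\}_{j,i}$ --- valid by the same identity --- and discarding $i$. Composing these simulations along the circuit of $A$ reproduces its output distribution for every $\Lambda$.

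I expect the main obstacle to be executing the second direction at full generality rather than its conceptual core: one has to invoke the structure theorem for the most permissive non-entangling operations --- separable \emph{instruments}, possibly with continuous (integral) rather than finite Kraus decompositions --- and make sure scheme $B$'s classical side information is rich enough to encode the state of an arbitrarily large ancilla, which the transcript trick handles. What remains is bookkeeping: an induction over the circuit of $A$ confirming that the branch weights and conditional states propagate correctly through composition, together with the routine linear algebra behind the instrument and POVM identities above.
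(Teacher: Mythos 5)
Your proposal is correct and follows essentially the same route as the paper's proof: the easy direction embeds the classical registers as computational-basis states of the ancilla, and the substantive direction exploits that the ancilla factor of each product Kraus branch is independent of $\Lambda$, so it can be tracked classically and used to reweight the system-side CP maps by the conditional branch probabilities (your identity $\sum_i\tr(B_i^\dagger B_i\rho_A^{(k)})A_i^\dagger A_i=I_S$ is exactly the normalization the paper uses, written with ratios of unnormalized ancilla traces). The only cosmetic difference is that you decompose separable channels into product Kraus operators while the paper works with sums of tensor products of CPTNI maps; these are equivalent for the finite-dimensional setting at hand.
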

\noindent The formal definitions of both schemes, rigorous statement of Proposition~\ref{prop:equiv}, and the proof are given in SM~Sec.~\ref{app:model}.
Thanks to Proposition~\ref{prop:equiv}, we see that both schemes capture the power of learning without entanglement, and be treated interchangeably when studying the sample complexity. In the remaining part of this paper, we will focus on the classical-memory-assisted scheme, which has a clearer operational meaning. Specifically, there can be two different notions of complexity: One is the sample complexity $N_\mr{samp}$, which is the number of copies of $\Lambda$; The other is the number of measurements $N_\mr{meas}$, which is the number of quantum instruments with non-trivial measurement.
Clearly, $N_\mr{samp}\ge N_\mr{meas}$, as one can concatenate multiple copies of $\Lambda$ and only make one measurement, just like in RB. The lower bound we derive later will hold for $N_\mr{meas}$.

\bigskip
\section{Bounds on Pauli channel learning}
Having set up the formalism, we now study the specific problem of Pauli channel learning. An $n$-qubit Pauli channel $\Lambda$ has the following two equivalent forms,
\begin{equation}
    \Lambda(\rho)\coleq\sum_{b\in{\sf P}^{n}}p_b P_b\rho P_b =   \frac{1}{2^n}\sum_{a\in{\sf P}^{n}}\lambda_aP_a\Tr[P_a\rho],
\end{equation}
where ${\sf P}^n=\{I,X,Y,Z\}^{\otimes n}$ is the $n$-qubit Pauli group (modulo phase), $\{p_b\}_b$ is the Pauli error rates, and $\{\lambda_a\}_a$ is the Pauli eigenvalues~\cite{flammia2020efficient}.
Note that Pauli eigenvalues are also known as Pauli fidelities, which have been useful in quantum benchmarking~\cite{erhard2019characterizing,hashim2020randomized,flammia2021averaged,carignan2023error}, quantum error mitigation~\cite{chen2020robust, van2023probabilistic, Ferracin2022Efficiently}, etc. %
The task we consider is to learn each of the Pauli eigenvalues $\lambda_a$ to additive precision $\varepsilon$ with high success probability.
More precisely, we have the following result.
\begin{theorem}\label{th:main}
    If there exists an entanglement-free scheme that, for any $n$-qubit Pauli channel $\Lambda$, outputs an estimator $\widehat\lambda_a$ such that 
    $|\widehat\lambda_a-\lambda_a|\le\varepsilon\le1/6$ with probability at least $2/3$ for any $a\in{\sf P}^n$, after making $N$ rounds of measurement, then $N=\Omega(2^{n}/\varepsilon^2)$.
\end{theorem}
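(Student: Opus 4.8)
The plan is to prove the lower bound using an information-theoretic (Le Cam / mutual-information) argument adapted to the classical-memory-assisted model, following the template of entanglement-free lower bounds in~\cite{chen2022quantum,chen2022exponential} but carried out carefully enough to remove the cubic slack. First I would set up a hypothesis-testing instance: fix a target Pauli label $a \in {\sf P}^n$, and consider distinguishing the ideal channel (all $\lambda = 1$, i.e.\ the identity) from a family of perturbed Pauli channels in which $\lambda_a = 1 - 2\varepsilon$ (or $\lambda_a \in \{1\pm 2\varepsilon\}$ chosen uniformly at random) while leaving the estimator unable to succeed unless it effectively identifies the perturbation. More robustly, I would randomize over a large subfamily — e.g.\ assign independent random signs to a suitable collection of eigenvalues, or place the perturbation on a uniformly random $a$ — so that the packing/Fano or $\chi^2$-type bound sees the full $4^n$-dimensional structure. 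A successful estimator of every $\lambda_a$ to precision $\varepsilon$ with probability $2/3$ yields a distinguisher, so $N_{\mathrm{meas}}$ must be large enough for the accumulated statistical information to resolve the hidden parameter.

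The core of the argument is to bound, round by round, how much information a single non-trivial measurement can extract. In the classical-memory-assisted picture, round $t$ consists of: a state $\rho_t$ on $\mc H_S$ (adaptively prepared, possibly after concatenating several noiseless-looking copies of $\Lambda$ — but on the ideal instance $\Lambda = \id$, so the pre-measurement state is some fixed-by-history $\rho_t$), the channel $\Lambda$ acting once, and then a quantum instrument producing outcome $o_t$. I would expand $\Lambda(\rho_t)$ in the Pauli basis and compute the outcome likelihood ratio between the perturbed and ideal instances: the deviation is governed by terms like $\varepsilon \, \Tr[P_a \rho_t] \, \Tr[P_a M_{o_t}]$, where $\{M_{o_t}\}$ is the POVM underlying the instrument. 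The key quantity is then an averaged squared-signal term of the form $\varepsilon^2 \sum_{o} \frac{(\Tr[P_a \rho_t]\Tr[P_a M_o])^2}{\Tr[M_o]}$ (or its expectation over the random sign / random $a$). The crucial lemma — and this is where the $2^n$ enters and the previous bound was loose — is that for any single-system state $\rho$ and any POVM $\{M_o\}$ on $n$ qubits,
\begin{equation}
    \sum_{o}\frac{\big(\Tr[P_a\rho]\,\Tr[P_a M_o]\big)^2}{\Tr[M_o]} = O\!\left(\frac{1}{2^n}\right)
\end{equation}
when averaged appropriately over $a$ (or, with a random-sign ensemble over many $a$, that $\mathbb{E}_a$ of the per-round information gain is $O(\varepsilon^2/2^n)$). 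Intuitively, a single $2^n$-dimensional system carries only one ``unit'' of information while there are $\sim 4^n$ eigenvalues, and each measurement outcome aligned with one $P_a$ is misaligned with the others; making this quantitative via operator inequalities ($\sum_a P_a \rho P_a = 2^n \,\Tr[\rho]\,I/2^n$, Cauchy--Schwarz, and $\sum_o M_o = I$) is the heart of the proof.

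Given the per-round bound, I would chain it: the total information (e.g.\ the KL divergence or $\chi^2$ between the outcome distributions under the two instances, or a Fano-style mutual information $I(\text{hidden parameter};\text{transcript})$) telescopes/subadditively accumulates to at most $O(N_{\mathrm{meas}}\, \varepsilon^2 / 2^n)$ because each round's contribution is controlled by the above lemma evaluated on the round-$t$ pre-measurement state (which, conditioned on the transcript so far, is a fixed state since we linearize around $\Lambda = \id$). Requiring this to exceed a constant (so that the estimator can succeed with probability $2/3$) forces $N_{\mathrm{meas}} = \Omega(2^n/\varepsilon^2)$. Two technical points need care: (i) handling \emph{adaptivity} — the state and instrument at round $t$ depend on past outcomes — which I would manage by conditioning and a martingale/chain-rule decomposition of the mutual information, exactly as in prior single-copy lower bounds; and (ii) handling the fact that a round may concatenate many copies of $\Lambda$ before measuring — but on the ideal instance each such copy is the identity, and to first order in $\varepsilon$ the perturbation across $k$ concatenated copies only multiplies the signal by a bounded factor (the per-eigenvalue deviation stays $O(\varepsilon)$ since $|\lambda_a|\le 1$), so the bound on $N_{\mathrm{meas}}$ is unaffected. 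The main obstacle I anticipate is proving the sharp $O(\varepsilon^2/2^n)$ per-round information bound with the correct constant and in a form robust to arbitrary (not necessarily rank-one, not necessarily projective) instruments and to the chosen randomization ensemble — getting this tight, rather than the $2^{n/3}$-type bound that comes from a cruder estimate, is precisely the new ingredient, and it likely requires choosing the hard instance ensemble (which eigenvalues to perturb, with what correlation structure) in tandem with the operator inequality so that the ``one system, $4^n$ parameters'' obstruction is saturated.
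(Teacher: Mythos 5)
Your high-level framework (Le Cam hypothesis testing, a per-round information bound of order $\varepsilon^2\,2^n/4^n$ obtained from Pauli twirling of the POVM elements, and chaining over rounds with conditioning on the transcript) matches the paper's strategy. However, there are two genuine gaps, and they are precisely where the difficulty of the theorem lies. First, your hard instance is built around the identity channel ($\lambda\equiv 1$) with a single eigenvalue perturbed to $1\pm 2\varepsilon$. Such a map is not completely positive: its Pauli error rates are $p_c=\frac{1}{4^n}\bigl(4^n\delta_{c,0}\mp 2\varepsilon(-1)^{\langle c,a\rangle}\bigr)$, which is negative for half of the $c\neq 0$. Any CP repair (e.g.\ injecting a Pauli error with probability $q$) necessarily perturbs exponentially many eigenvalues at once and destroys the needle-in-a-haystack structure. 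The paper instead centers the construction at the \emph{completely depolarizing} channel, $\Lambda_{a,\pm}=\lketbra{\sigma_0}{\sigma_0}\pm\varepsilon_0\lketbra{\sigma_a}{\sigma_a}$, whose CP-ness is immediate and which is essential for the second point below.

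Second, your claim (ii) --- that concatenating $k$ copies of the channel between measurements ``only multiplies the signal by a bounded factor'' --- fails for an identity-centered ensemble: around the identity, $k$ concatenations amplify a first-order eigenvalue perturbation by a factor of $k$, so a scheme that concatenates $k\sim 1/\varepsilon$ copies per measurement sees an $O(1)$ per-round signal and your argument can only yield $N_{\mathrm{meas}}=\Omega(2^n)$, not $\Omega(2^n/\varepsilon^2)$. The depolarizing base kills this amplification (the eigenvalues of $\Lambda_{a,\pm}^k$ are $(\pm\varepsilon_0)^k$), but even then one must control how the mid-circuit states under the two sign hypotheses drift apart across rounds: the cross term $\varepsilon_0\bigl|\Tr(E_{o_t}P_a)\Tr(P_a\Delta_a^{(\bm o_{<t})})\bigr|/\Tr(E_{o_t})$, with $\Delta_a^{(\bm o_{<t})}=\tfrac12(\rho_{a,+}^{(\bm o_{<t})}-\rho_{a,-}^{(\bm o_{<t})})$, does not vanish once concatenation and mid-circuit measurement are allowed, and it is not handled by ``linearizing around the null.'' The paper's main technical contribution is exactly here: a recurrence relation for $\mu_{a,\pm}^{(\bm o_{<t})}=\Tr(P_a\rho_{a,\pm}^{(\bm o_{<t})})$ under the normalized post-instrument update, together with an induction showing $\mathbb{E}_{a\neq 0}\,(\mu_{a,\pm}^{(\bm o_{<t})})^2=O(2^n/4^n)$ uniformly in $t$, which in turn bounds $\mathbb{E}_{a\neq 0}\Tr^2(P_a\Delta_a^{(\bm o_{<t})})=O(\varepsilon_0^2\,2^n/4^n)$. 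Without this ingredient (or an equivalent), the chaining step of your proposal does not close.
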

This matches the known upper bound of $O(2^n/\varepsilon^2)$ based on minimal stabilizer covering~\cite{flammia2020efficient,chen2022quantum},
solving an open problem raised therein. 
Note that the task we consider is to estimate each $\lambda_a$ with $2/3$ success probability \emph{individually} rather than \emph{simultaneously}. For the latter task, there will be an additional factor of $n$ in the upper bound, but our lower bound still holds and is tight up to this logarithmic factor.
Combined with the bound of $\Theta(1/\varepsilon^2)$ using entanglement-assisted scheme~\cite{chen2022quantum}, this gives a tight exponential separation for learning with and without entanglement in the task of Pauli channel learning. Another noteworthy feature is that our lower bound has a moderate constant factor. For example, with $\varepsilon\le0.1$ and $n\ge5$ we have $N\ge 0.01 \times 2^n/\varepsilon^2$.

To highlight the experimental relevance of our result, in Fig.~\ref{fig:performance} we plot our lower bound of Theorem~\ref{th:main}, the best previously known ancilla-free lower bound from~\cite{chen2022quantum}, and the upper bound from an entanglement-assisted scheme studied in~\cite{chen2022quantum} with noisy Bell states preparation (see SM.~Sec.~\ref{app:numeric} for details).
Fig.~\ref{fig:performance} clearly indicates that our improved lower bound is crucial for demonstrating the entanglement-enabled advantages with 
a moderate number of qubits and fidelity. 
For example, with Bell pair fidelity $95\%$, the previous lower bound needs at least $85$ qubits to start seeing any separation, while our improved lower bound needs as few as $25$ qubits to obtain a factor of $10^5$ advantages in sample complexity; With Bell pair fidelity below $90\%$, only our improved lower bound is able to obtain any separation.

\begin{figure}[tp]
    \centering
    \includegraphics[width=\columnwidth]{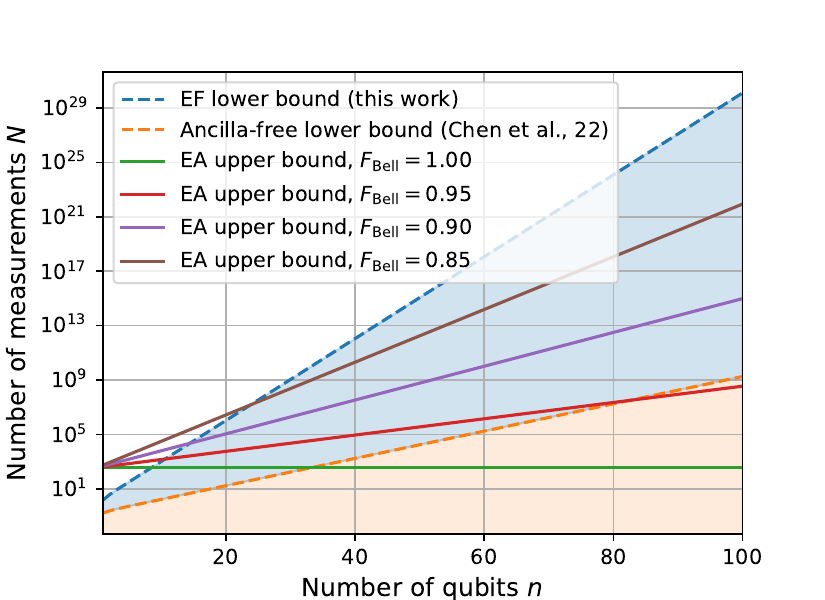}
    \caption{Sample complexity for Pauli channel learning. The task is to estimate any Pauli fidelity to $\varepsilon=0.1$ additive precision with at least $2/3$ success probability. The dash lines represent our entanglement-free (EF) lower bound of Thm.~\ref{th:main} and the ancilla-free lower bound from~\cite{chen2022quantum}. The solid lines represent the sample complexity upper bound calculated from an entanglement-assisted (EA) scheme with noisy Bell state and measurements. For simplicity, we assume the state preparation suffers from depolarizing noises, so that each noisy $2$-qubit Bell pair $\tilde{\rho}_\mr{Bell}$ has Fidelity $F_\mr{Bell}=\bra{\Psi_+}\tilde\rho_\mr{Bell}\ket{\Psi_+}$. The colored region indicates entanglement(ancilla)-enabled advantages.}
    \label{fig:performance}
\end{figure}

\begin{proof}[Proof Sketch of Theorem~\ref{th:main}.]
We extend the framework for proving exponential separable between learning with and without quantum memory~\cite{huang2022quantum,chen2022exponential}. 
The key idea is known as the Le Cam's two-point method~\cite{lecam1973convergence} that reduces learning to hypothesis testing. Specifically, we first construct two hypotheses of Pauli channels (or, mixture of Pauli channels) that are close to each other. 
By assumption, a learning scheme can distinguish the two hypotheses with good probability.
Consequently, the total variation distance (TVD) between the outcome probability distribution generated by the scheme under the two hypotheses needs to be at least constantly large.
Therefore, if we can upper bound the contributions to the TVD from each measurement to be exponentially small, we will obtain an exponential lower bound on the number of measurements.
However, the existing techniques for upper bounding TVD~ \cite{huang2022quantum,chen2022exponential} do not carry over when channel concatenation is allowed, let alone mid-circuit measurements.
Our technique to address this issue is to establish a recurrence relation on the mid-circuit states between each measurement step,
and to upper bound the growth of TVD via mathematical induction. 
The full proof is presented in SM.~Sec.~\ref{app:bound}.
\end{proof}
\medskip

\section{Bounds on learning identifiable parameters}
In practice, Pauli channels are often used to model noise affecting Clifford gates~\cite{wallman2016noise,hashim2020randomized,van2023probabilistic}. One issue for learning \emph{gate-dependent} noise channel is that, because of the existence of SPAM error, certain parameters of the noise channel might become non-identifiable (or, ``unlearnable''), meaning that they cannot be identified independently from the noisy SPAM~\cite{Blume-Kohout2013Robust,proctor2017randomized, nielsen2021gate, huang2022foundations,chen2023learnability}.
Specifically, for Clifford gate-dependent Pauli noise channel, a complete characterization of the learnable parameters is given in \cite{chen2023learnability}, which shows that some Pauli eigenvalues $\lambda_a$ cannot be identified SPAM-independently, but the geometric mean of certain set of eigenvalues, $\bigl(\prod_{a\in S}\lambda_a\bigr)^{1/|S|}$, can be. 
This is consistent with the existing noise learning protocols that can characterize gate-dependent Pauli noise SPAM-robustly up to some degeneracy~\cite{erhard2019characterizing,hashim2020randomized,carignan2023error}. 
Our goal here is to find a lower bound for these scenarios.
More precisely, we hope to address the following question: What is the sample complexity to learn the identifiable parameters rather than the whole Pauli channel?

We will focus on a further simplified task: Given a partition of $n$-qubit Pauli operators into some disjoint sets, $\{S_i\}_i$. The task is to learn the geometric mean of the Pauli eigenvalues within each $S_i$ to additive precision $\varepsilon$ with high probability. We denote the maximal cardinality among all $S_i$ by $C$. 
As a motivating example, for the Pauli noise associated with a CNOT gate, it is shown~\cite{chen2023learnability} that the Pauli fidelity $\lambda_{XI},\lambda_{XX}$ cannot be identified individually, but the geometric mean $\sqrt{\lambda_{XI}\lambda_{XX}}$ can. 
All the other learnable parameters are also geometric mean of up to two Pauli fidelities (related to the fact that $\mr{CNOT}^2=\mathds1$). 
Note that, there can be more learnable parameters than those decided by a partition (\textit{e.g.}, $\sqrt{\lambda_{XI}\lambda_{XX}},\sqrt{\lambda_{YI}\lambda_{YX}},\sqrt{\lambda_{XI}\lambda_{YX}}$ are three independent learnable parameters), but this simplified task is sufficient to give a sample complexity lower bound.
We have the following result.
\begin{theorem}\label{th:coarse}
    Given a partition of the $n$-qubit Pauli group, $\{S_i\}_i$, with maximum cardinality $C\coleq\max_i|S_i|$, 
    define the geometrically-averaged Pauli fidelity $$\bar{\lambda}_{S_i}\coleq\mr{sgn}(\prod_{a\in S_i}\lambda_i)\prod_{a\in S_i}\left|\lambda_i\right|^{1/{|S_i|}}.$$    
    If there exists an entanglement-free scheme that, for any $n$-qubit Pauli channel $\Lambda$, outputs an estimator $\widehat\lambda_{S_i}$ such that $|\widehat\lambda_{S_i}-\bar\lambda_{S_i}|\le\varepsilon\le1/6C$ with probability at least $2/3$ for any $S_i$,
    after making $N$ rounds of measurements, then $N=\Omega({2^n}\varepsilon^{-2}{C^{-2}})$.
\end{theorem}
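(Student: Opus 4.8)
The plan is to run the same argument as for Theorem~\ref{th:main} (indeed, the case $C=1$, in which every block is a singleton, \emph{is} Theorem~\ref{th:main}): reduce the estimation problem to hypothesis testing via Le Cam's two-point method, and then bound the total variation distance (TVD) between the two transcript distributions by a sum over measurement rounds of exponentially small per-round contributions, controlled through the mid-circuit-state recurrence established there. The only part that genuinely needs rework is the two-point construction, which must now be set up so that the parameter separating the two hypotheses is a \emph{geometric mean} $\bar\lambda_{S_{i^*}}$ rather than an individual eigenvalue.

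For the construction I would take the null hypothesis to be the depolarizing channel with $\lambda_a=\mu$ for all $a\neq I^{\otimes n}$ and a fixed constant $\mu\in(0,1)$, so that $\bar\lambda_{S_i}=\mu$ for every block $S_i$. (One cannot use the \emph{completely} depolarizing channel here: a geometric mean containing a vanishing eigenvalue is insensitive to perturbing the remaining eigenvalues in its block, so the target parameter would not move.) The alternative hypothesis is the uniform mixture, over $a^*\in{\sf P}^n\setminus\{I^{\otimes n}\}$, of the channel obtained from the null by shifting the single eigenvalue $\lambda_{a^*}$ to $\mu+\delta$. Because $\{S_i\}_i$ is a partition, $a^*$ lies in exactly one block $S_{i^*}$, so only $\bar\lambda_{S_{i^*}}$ moves --- from $\mu$ to $\mu\,(1+\delta/\mu)^{1/|S_{i^*}|}$, which is at least $\mu+\Omega(\delta/C)$. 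Choosing $\delta=\Theta(C\varepsilon)$ then makes the two hypotheses differ in $\bar\lambda_{S_{i^*}}$ by a large enough multiple of $\varepsilon$ that any $\varepsilon$-accurate estimator for block $S_{i^*}$ distinguishes them, while the requirement that the perturbed channel remain CPTP, i.e.\ $\mu+\delta\le1$, is exactly what confines the argument to $\varepsilon\le1/6C$ (after optimizing $\mu$ and the constant in $\delta$). A routine averaging/data-processing step --- comparing $P_0$ against the mixture jointly with the identity of the perturbed block, so that the two joint distributions share the same marginal over blocks --- then turns the per-block $2/3$-success guarantee into $\E_{a^*}\,\mathrm{TVD}\bigl(P_0,P_1^{(a^*)}\bigr)=\Omega(1)$, where $P_0$ and $P_1^{(a^*)}$ are the transcript distributions under the null and under the $a^*$-perturbed channel.

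The information-theoretic half then transfers from Theorem~\ref{th:main} essentially verbatim. For whatever mid-circuit state $\rho$ is fed into a given copy of $\Lambda$, the two branches differ by $\frac{\delta}{2^n}P_{a^*}\Tr[P_{a^*}\rho]$ --- exactly the perturbation form appearing in the single-eigenvalue construction of Theorem~\ref{th:main}, with $\delta$ in place of $\varepsilon$. Propagating this through the recurrence on mid-circuit states and inducting over the measurement rounds (the step that accommodates channel concatenation and mid-circuit measurement with classical feedforward) bounds the accumulated $\E_{a^*}\,\mathrm{TVD}$ after $N$ rounds by $N\cdot O(\delta^2/2^n)=N\cdot O(C^2\varepsilon^2/2^n)$. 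Comparing with the $\Omega(1)$ lower bound gives $N=\Omega(2^n\varepsilon^{-2}C^{-2})$.

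I expect the main obstacle to coincide with the one already solved for Theorem~\ref{th:main}: establishing and iterating the mid-circuit-state recurrence that controls per-round TVD growth despite concatenation and feedforward, which should carry over unchanged. The only genuinely new care needed is in the two-point construction --- $\mu$ must be bounded away from both $0$ and $1$ so that every block geometric mean is well defined and stays first-order sensitive to a single-eigenvalue shift, while a shift of magnitude $\Theta(C\varepsilon)$ must keep the channel physical; this interplay is precisely what produces the factor $C^{-2}$ together with the restriction $\varepsilon\le1/6C$.
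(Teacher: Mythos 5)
Your proposal correctly identifies the overall architecture (Le Cam two-point method, partial reveal, per-round TVD accounting via the mid-circuit-state recurrence), but the one component you single out as ``the only part that genuinely needs rework'' --- the two-point construction --- is resolved in the paper in a way you rule out, and the construction you substitute breaks the very machinery you claim transfers verbatim. The paper \emph{keeps} the completely depolarizing null and, instead of shifting a single eigenvalue, perturbs \emph{every} eigenvalue in the chosen block $B$ simultaneously: $\Lambda_{B,\pm}=\lketbra{\sigma_0}{\sigma_0}\pm\frac{\varepsilon_0}{|B|}\sum_{b\in B}\lketbra{\sigma_b}{\sigma_b}$, with the block drawn from $\pi(B)=|B|/(4^n-1)$ so that $\E_{B\sim\pi}\E_{b\in B}$ reduces to the uniform average over ${\sf P}^n\setminus I$. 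This moves the geometric mean from $0$ to $\pm\varepsilon_0/|B|$, which is at least $\varepsilon_0/C$ in magnitude, sidestepping your (correct, but only for single-eigenvalue perturbations) objection that a geometric mean through zero is insensitive; CPTP holds since $p_a\ge 4^{-n}(1-|B|\cdot\varepsilon_0/|B|)\ge 0$. Setting $\varepsilon=\varepsilon_0/(2C)$ is exactly where the $C^{-2}$ and the restriction $\varepsilon\le 1/6C$ come from.

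The genuine gap is your claim that with a null of the form $\lambda_a=\mu$ for all $a\ne 0$ with $\mu$ bounded away from $0$, ``the information-theoretic half transfers essentially verbatim.'' It does not. The Theorem~\ref{th:main} analysis uses the completely depolarizing null in three load-bearing places: (i) $p_0(\bm o)$ factorizes as $\prod_t 2^{-n}\Tr(E_{o_t}^{(\bm o_{<t})})$ with no state dependence, which is what lets the likelihood ratio be written as a clean product over rounds; (ii) the alternative's post-channel state is $\propto I\pm\varepsilon_0\mu_{a}^{(t)}P_a$ --- only the identity and the single perturbed Pauli component survive --- yielding the \emph{scalar} recurrence \eqref{eq:recurrence} for $\mu_{a,\pm}^{(t)}$; and (iii) the induction bounding $\E_{a\ne0}(\mu_{a,\pm}^{(t)})^2$ closes only because the contraction coefficient $2\varepsilon_0^2/(1-\varepsilon_0)^2$ is below $1$, which is a consequence of the perturbation being the \emph{only} nonzero eigenvalue and being small. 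With $\lambda_b=\mu=\Theta(1)$ for all $b$, the channel output retains the full input state ($\Lambda(\rho)=(1-\mu)I/2^n+\mu\rho+\cdots$), so $p_0(\bm o)$ no longer factorizes state-independently, the recurrence for $\Tr(P_{a^*}\rho^{(t)})$ picks up contributions $\sum_{b\ne 0}\mu\,c_{a^*,b}\Tr(P_b\rho^{(t)})$ from every Pauli direction (a full $4^n$-dimensional affine recursion with no small contraction factor), and the accumulated difference between the two hypotheses' trajectories is no longer controlled by the paper's induction. You would also need a symmetric $\pm\delta$ mixture (absent from your sketch) to cancel the first-order term in the per-round likelihood ratio; without it the per-round TVD contribution is $O(\delta\cdot 2^{-n/2})$ rather than $O(\delta^2\cdot 2^{-n})$ and the bound degrades. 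In short, the construction is where your proposal deviates from the paper, and that deviation is not cosmetic: it invalidates the transfer of the hard part of the proof.
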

Many multi-qubit Clifford gates of practical interest have a polynomial (\textit{e.g.}, permutation gate)  or constant (\textit{e.g.}, parallel CNOTs) order. That is, applying the gate a polynomial/constant number of times yields identity. This means their learnable parameters are groups of at most polynomial/constant many Pauli eigenvalues.
Our result shows that there is still an exponential sample complexity lower bound for entanglement-free learning schemes in such cases. 

The techniques for proving Theorem~\ref{th:coarse} is very similar to those for Theorem~\ref{th:main}. Here, we just need to construct a different family of Pauli channels that can be distinguished by only looking at the geometrically averaged Pauli eigenvalues within each $S_i$. The other steps will carry over.
One may notice that Theorem~\ref{th:main} can be viewed as a corollary of Theorem~\ref{th:coarse}. We decide to present them separately for clarity. 
The proof is given in SM Sec.~\ref{app:degeneracy}.

\bigskip
\section{Discussion}
In this work, we introduce two classes of learning schemes to capture the notion of learning without entanglement. One is quantum resource-theoretic, using only entanglement non-generating operations between system and ancilla. The other is operational, describing quantum circuits assisted by mid-circuit measurement and classical feedforward. Both schemes are shown to be equivalent in terms of sample complexity.
We then prove a tight lower bound for Pauli channel learning within this model. Our results extend existing proof techniques~\cite{huang2022quantum,chen2022exponential} and improve upon the best-known lower bounds in the literature~\cite{chen2022quantum}. We also generalize our bounds for practical quantum noise characterization settings.

Our scenario differs from existing frameworks of learning with or without quantum memory, which we briefly review in below. For learning properties of quantum states, learning with quantum memory (or, quantum-enhanced learning) usually means one can perform collective measurement on multiple copies of the state, while learning without quantum memory (or, conventional learning) means only measurements on individual copies are allowed, though adaptivity is often allowed. Examples include \cite{bubeck2020entanglement, huang2021information, huang2022quantum,chen2022exponential,chen2022tight}. For learning properties of quantum channels, there have been multiple definitions of learning without memory: Refs.~\cite{huang2021information,aharonov2022quantum} requires the scheme to have no ancilla nor concatenation (i.e., sequentially applying the channel of interest); Refs.~\cite{huang2022quantum,chen2022exponential,caro2022learning} \textit{etc.} allow ancilla but not concatenation; In contrast, Ref.~\cite{chen2022quantum} studies schemes with concatenation but without ancilla. The scenario in the current paper is strictly more general than Ref.~\cite{chen2022quantum}, as we allow mid-circuit processing with quantum instruments instead of only quantum channels, and we justify our definition by connecting to the resource theory of entanglement.
It is interesting to explore other learning tasks that admit a separation in our definition.

The problem of Pauli channel learning has been studied with different figures of merit. 
For Pauli error rates, Ref.~\cite{flammia2020efficient} provides an ancilla-free protocol that learns the Pauli error rates to precision $\varepsilon$ in $l_2$-distance with $\widetilde O(2^n/\varepsilon^2)$ samples, which implies an upper bound of $\widetilde O(2^{3n}/\varepsilon^2)$ for learning in $l_1$-distance.
Ref. \cite{flammia2021pauli} shows that $\widetilde O(\log n/\varepsilon^2)$ samples is sufficient to learn the Pauli error rates to precision $\varepsilon$ in $l_\infty$-distance without using ancilla; In the case that ancilla is allowed, one can use Bell states and Bell measurements to directly sample from the Pauli error rates, which implies an $O(2^{2n}/\varepsilon^2)$ upper bound for learning in $l_1$-distance~\cite{flammia2021pauli,chen2022quantum};
For Pauli eigenvalues, Ref.~\cite{chen2022quantum} gives a family of $k$-qubit ancilla-assisted protocols using $O(n2^{n-k}/\varepsilon^2)$ samples for learning in $l_\infty$-distance, for $0\le k\le n$.
In terms of the lower bounds, Ref.~\cite{chen2022quantum} focuses on learning Pauli eigenvalues to constant error in $l_\infty$-distance, and in particular obtains a lower bound $\Omega(2^{n/3})$ for the number of measurements for any ancilla-free schemes with concatenation. %
The current work improves this lower bound to be tight (for more general schemes).
Another recent work studies learning Pauli error rate to error $\varepsilon$ in $l_1$-distance~\cite{fawzi2023lower}. For ancilla-free schemes with adaptivity, they obtains $\Omega(2^{2n}/\varepsilon^2)$ for general case and $\Omega(2^{2.5n}/\varepsilon^2)$ when $\varepsilon$ is exponentially small in $n$.
They allows concatenation with \emph{unital} processing channels and the lower bound holds for the number of measurements.
The results of the current work and Ref.~\cite{fawzi2023lower} do not imply each other~\footnote{We remark that, a lower bound of $\Omega(2^{3n}/\varepsilon^2)$ for learning Pauli error rates in $l_1$-distance would imply a bound of $\Omega(2^{n}/\varepsilon^2)$ for learning Pauli eigenvalues in $l_\infty$-distance, via the Parseval–Plancherel identity and relations of $l_p$-norm.}. Whether a tighter lower bound can be established with other figures of merit remains an open problem.

Our results have implications in practical quantum noise characterization tasks. On the one hand, it sets up an exponential barrier for any entanglement-free Pauli channel learning protocols~\cite{flammia2020efficient,erhard2019characterizing}, without additional assumptions on the Pauli noise model. The barrier persists even if one is only aimed at learning the SPAM-independently identifiable part of the noise channel. 
On the other hand, this motivates the development of an entanglement-assisted noise characterization protocol, which is pioneered in Ref.~\cite{chen2022quantum}. It is shown there that an entanglement-assisted protocol can learn the Pauli eigenvalues efficiently and SPAM-robustly, given access to good quantum memory. We believe the tight bounds obtained in the current work will strengthen the foundation for experimentally demonstrating the advantage of entanglement in this noise characterization task.

Finally, it is interesting to explore a deeper connection between the resource theory of entanglement~\cite{bennett1996concentrating,bennett1996mixed,vedral1997quantifying} and quantum learning theory. Specifically, since the tight bounds for Pauli channel learning with no entanglement and arbitrary entanglement have been settled, it is natural to ask what about learning with a certain amount of entanglement. There could be different ways of defining the entanglement cost of a learning scheme, one of which is to allow a bounded amount of ancillary qubits~\cite{chen2022quantum,chen2022exponential}. An upper bound of $\widetilde O(2^{n-k}/\varepsilon^2)$ for $k$-ancillary-qubit-assisted scheme is known and proved optimal for some restricted class of schemes~\cite{chen2022quantum}, but a general answer to this question is yet to be found.

\bigskip
\textit{Notes added.\textemdash}
During the completion of this manuscript, we are aware of an independent and contemporaneous work~\cite{chen2023futility} that obtains, among other results, a tight lower bound of $\Omega(2^n/\varepsilon^2)$ on the number of measurements for learning every Pauli eigenvalues to $\varepsilon$ precision with ancilla-free concatenating schemes.
Their proof is based on a different technique, which leads to different features in their results compared to ours:
On the one hand, the $\varepsilon$ in their bound can be any value within $(0,1]$, while ours is restricted to $\varepsilon\in(0, 1/6]$; On the other hand, our bound has a better constant factor and applies for the more general setting of classical-memory-assisted schemes.

\begin{acknowledgments}
    We thank Matthias Caro, Steve Flammia, John Preskill, Alireza Seif for helpful discussions. 
    We thank Sitan Chen, Weiyuan Gong for communicating with us about their independent and comtemporaneous results.
    {S.C., C.O., L.J. acknowledge support from the ARO(W911NF-23-1-0077), ARO MURI (W911NF-21-1-0325), AFOSR MURI (FA9550-19-1-0399, FA9550-21-1-0209), NSF (OMA-1936118, ERC-1941583, OMA-2137642), NTT Research, and the Packard Foundation (2020-71479). This material is based upon work supported by the U.S. Department of Energy, Office of Science, National Quantum Information Science Research Centers.
    S.C. thanks Caltech IQIM for hospitality, where part of this work is completed.}
    S.Z. acknowledges funding provided by the Institute for Quantum Information and Matter, an NSF Physics Frontiers Center (NSF Grant PHY-1733907) and Perimeter Institute for Theoretical Physics, a research institute supported in part by the Government of Canada through the Department of Innovation, Science and Economic Development Canada and by the Province of Ontario through the Ministry of Colleges and Universities.
    H.H. is supported by a Google PhD fellowship and a MediaTek Research Young Scholarship.
    H.H. acknowledges the visiting associate position at Massachusetts Institute of Technology.
\end{acknowledgments}

\bibliography{Pauli.bib}

\newpage
\widetext
\begin{center}
\textbf{\large Supplemental Materials: Tight bounds on Pauli channel learning without entanglement}
\end{center}
\setcounter{equation}{0}
\setcounter{figure}{0}
\setcounter{table}{0}
\setcounter{theorem}{0}
\setcounter{section}{0}
\makeatletter
\renewcommand{\thesection}{S\arabic{section}}
\renewcommand{\theequation}{S\arabic{equation}}
\renewcommand{\thefigure}{S\arabic{figure}}
\renewcommand{\bibnumfmt}[1]{[S#1]}
\renewcommand{\citenumfont}[1]{S#1}

\tableofcontents

\section{Preliminaries}\label{sec:pre}

For a Hilbert space $\mc H_S$ (sometimes denoted simply as $S$), we use $\mc L(H_S)$ to denote the set of linear operators acting on $\mc H_S$. 
A quantum channel on $S$ is a linear map $\mc L(\mc H_S)\mapsto\mc L(\mc H_S)$ that is completely-positive and trace-preserveing (CPTP)~\cite{nielsen2002quantum}. We will also make use of the completely-positive and trace-non-increasing map (CPTNI), which is defined to be a CP map that does not increase the trace of any input positive operator. A simple example of a CPTNI map is $\rho\mapsto K\rho K^\dagger$ for any $K$ satisfying $K^\dagger K\le I$.
A {quantum instrument} (or, a classical-quantum channel) on system $S$ and a classical register $A$ is a quantum channel of the following form,
\begin{equation}
    \mc C^{S\to AS}(\cdot) = \sum_j \ketbra{j}{j}^A\otimes\mc C_j^{S}(\cdot),\quad \mc C_j^S\in\mr{CPTNI}.
\end{equation}
Here $\{\ket{j}\}_j$ is an orthonormal basis representing the measurement outcome, and $\mc C_j$ controls both the measurement outcome probability distribution and the post-measurement states. We emphasis that the choice of $\mc C_j^S$ must guarantee $\mc C^{S\to AS}$ to be CPTP as a whole.

\medskip

We follow the notations of~\cite{chen2022quantum}. 
For an $n$-qubit Hilbert space, define ${\sf P}^n$ as the Pauli group modulo the global phase.
${\sf P}^n$ is an Abelian group isomorphic to $\mbb Z^{2n}_2$.
Specifically, we view every $a\in\mbb Z^{2n}_2$ as a $2n$-bit string $a={a_{x,1}a_{z,1}a_{x,2}a_{z,2}\cdots a_{x,n}a_{z,n}}$ corresponding to the Pauli operator
\begin{equation}    
	P_a = \otimes_{k=1}^n i^{a_{x,k}a_{z,k}} X^{a_{x,k}}Z^{a_{z,k}},
\end{equation}
where the phase is chosen to ensure Hermiticity.
We also define a sympletic inner product $\expval{\cdot,\cdot}$ within $\mbb Z_{2}^{2n}$ as 
\begin{equation}
	\expval{a,b} = \sum_{k=1}^n (a_{x,k}b_{z,k}+a_{z,k}b_{x,k}) \mod{2}.
\end{equation}
One can verify that $P_aP_b = (-1)^\expval{a,b}P_bP_a$~\cite{nielsen2002quantum}. We sometimes use $\mbb Z^{2n}_2$ and ${\sf P}^n$ interchangably, slightly abusing the notations.

\medskip
\noindent An $n$-qubit Pauli channel $\Lambda$ is a quantum channel of the following form
\begin{equation}
	\Lambda(\cdot) = \sum_{a\in\mbb Z_2^{2n}}p_a P_a(\cdot)P_a,
\end{equation}
where $\{p_a\}_a$ is called the \emph{Pauli error rates}. A linear map of the above form being CPTP is equivalent to that $\sum_a p_a = 1$ and $p_a\ge0$ for all $a$.

\medskip

\noindent An alternative expression for $\Lambda$ is
\begin{equation}
	\Lambda(\cdot) = \frac{1}{2^n}\sum_{b\in\mbb Z_2^{2n}}\lambda_b\Tr(P_b(\cdot))P_b,
\end{equation}
where $\{\lambda_b\}_b$ is called the \emph{Pauli eigenvalues}~\cite{flammia2020efficient,flammia2021pauli}. 
It is also known as the \emph{Pauli fidelities}, as $\lambda_a=\Tr[P_a\Lambda(P_a)]/2^n$.
For a CPTP map we necessarily have $\lambda_a\in[-1,1]$ for all $a$ and $\lambda_0=1$, but this is not sufficient as the completely-positivity is not guaranteed, so one needs to use extra caution when defining a Pauli channel with this representation.

These two sets of parameters are related by the Walsh-Hadamard transform
\begin{equation}\label{eq:walsh_hadamard}
	\begin{aligned}
		\lambda_b = \sum_{a\in\mbb Z_2^{2n}}p_a(-1)^\expval{a,b},\quad
		p_a = \frac{1}{4^n}\sum_{b\in\mbb Z_2^{2n}}\lambda_b(-1)^\expval{a,b}.
	\end{aligned}
\end{equation}

\medskip
\noindent We will also use the \emph{Pauli-transfer-matrix} (PTM) representation to simplify notations.
A linear operator $O$ acting on a $2^n$-dimensional Hilbert space can be viewed as a vector in a $4^n$-dimensional Hilbert space. We denote this vectorization of $O$ as $\lket{O}$ and the corresponding Hermitian conjugate as $\lbra{O}$. The inner product within this space is the \emph{Hilbert-Schmidt} product defined as $\lbraket{A}{B}\coleq \Tr(A^\dagger B)$. The \emph{normalized Pauli operators} $\{\sigma_a\coleq P_a/\sqrt{2^{n}},~a\in\mbb Z_2^{2n}\}$ forms an orthonormal basis for this space. 
In the PTM representation, a superoperator (\textit{i.e.}, quantum channel) becomes an operator acting on the $4^n$-dimensional Hilbert space, sometimes called the Pauli transfer operator. 
Explicitly, we have $\lket{\Lambda(\rho)} = \Lambda^\ptm\lket{\rho} \equiv \Lambda\lket{\rho}$,
where we use the same notation to denote a channel and its Pauli transfer operator, which should be clear from the context.
Specifically, a general Pauli channel $\Lambda$ has the following Pauli transfer operator
$$
\Lambda = \sum_{a\in\mbb Z_2^{2n}} \lambda_a\lketbra{\sigma_a}{\sigma_a}.
$$
        
\section{Models for learning without entanglement}\label{app:model}
    In this section, we formally define the separable schemes and the classical-memory-assisted schemes for quantum learning. 
    Then, we prove the equivalence between the two schemes. Namely, either one can simulate the other with the same sample complexity.

\begin{definition}
    A \textbf{separable scheme} (SEP) for learning properties from $N$ copies of a channel $\Lambda$ acting on $\mc L(\mc H_S)$ is specified by the following elements:
    \begin{enumerate}
        \item An arbitrarily large ancillary system $\mc H_A$ whose dimension can depend on $n$.
        \item A collection of processing channels $\{\mc C_t\}_{t=1}^{N-1}$ acting on $\mc L(\mc H_A\otimes\mc H_S)$ such that each $\mc C_t$ is separable across $A$ and $S$. That is, 
        $$
        \mc C_t = \sum_j \mc A^A_{t,j}\otimes\mc B^S_{t,j},\quad  ~\mr{where}~\mc A_{t,j},~\mc B_{t,j}~\mr{is~CPTNI}.
        $$
        \item An initial state $\rho_0$ and a final POVM measurement $\{E_k\}_k$ acting on $\mc H_{A}\otimes \mc H_{S}$ such that both of them are separable across $A$ and $S$. That is,
        $$
        \rho_0= \sum_j\sigma_j^A\otimes\gamma_j^S,\quad E_k=\sum_j M_{k,j}^A\otimes N_{k,j}^S,\quad~\mr{for}~\sigma_j,\gamma_j,M_{k,j},N_{k,j}\ge 0.
        $$
    \end{enumerate}
    The separable scheme works by inputing $\rho_0$, interleaving $N$ copies of $\Lambda$ with the processing channels $\mc C_t$, and measuring $\{E_k\}_k$ at the end, yielding the following outcome distribution:
    $$
    \Pr[k | \Lambda] = \Tr\left[E_k \left((\mathds 1^A\otimes\Lambda^S)\circ\mc C_{N-1}\circ\cdots\circ(\mathds 1^A\otimes\Lambda^S)\circ\mc C_1\circ(\mathds 1^A\otimes\Lambda^S)\right)(\rho_0)\right].
    $$
    Finally, an algorithm is specified to predict the desired properties based on the observed outcome $k$.
\end{definition}

\noindent Since we allow arbitrarily large ancilla for separable scheme, any adaptive control can be taken into account. We note again that separable operations contains LOCC as a subset.

\bigskip

\begin{definition}
    A \textbf{classical-memory-assisted scheme} for learning properties from $N$ copies of a channel $\Lambda$ acting on $\mc L(\mc H_S)$ is specified by the following elements:
    \begin{enumerate}
        \item $N$ arbitrarily large classical registers $\{A_t\}_{t=0}^{N-1}$.
        \item A collection of processing channels (or, quantum instruments) $\{\mc C^{\bm o_{<t}}\}_{t=1}^{N-1}$ defined as
        $$
        \mc C^{\bm o_{<t}}(\rho) \coleq \sum_{o_t} \ketbra{o_t}{o_t}^{A_{t}}\otimes\mc C_{o_t}^{\bm o_{<t}}(\rho)^S,\quad~\mr{for}~\mc C_{o_t}^{\bm o_{<t}}\in\mr{CPTNI}(\mc L(\mc H_S)),
        $$
        where the superscript of $o_{<t}$ indicates that the $t$-th processing channels can be chosen adaptively conditioned on the previous $t$ classical registers.
        \item An ensemble of initial states 
        $$\rho_0 \coleq \sum_{o_0}\ketbra{o_0}{o_0}^{A_0}\otimes\rho_{o_0}^S,$$ 
        where $\rho_{o_0}$ is some sub-normalized quantum state,
        and a final POVM measurement 
        $$\{E^{o_{<N}}_{o_N}\}^S_{o_N},$$ 
        which can be chosen adaptively conditioned on all previous outcomes.
    \end{enumerate}
    The scheme works by inputing the initial state, interleaving $N$ copies of $\Lambda$ with the processing channels, and measuring at the end, yielding the following outcome distribution:
    $$
    \Pr[\bm o | \Lambda] \equiv \Pr[o_{0:N} | \Lambda]= \Tr\left[E_{o_N}^{o_{<N}} \left(\Lambda\circ\mc C_{o_{N-1}}^{o_{<N-1}}\circ\cdots\circ\mc C_{o_2}^{o_{<2}}\circ\Lambda\circ\mc C_{o_1}^{o_0}\circ\Lambda\right)(\rho_{o_0})\right].
    $$
    Finally, an algorithm is specified to predict the desired properties based on the observed outcome $\bm o$.
\end{definition}

\noindent We remind that classical-memory-assisted schemes can be understood as quantum circuits assisted by mid-circuit measurement and feed-forward control.

\bigskip

We now prove the equivalence between these two schemes. We call a scheme $A$ can be \textbf{simulated} by another scheme $B$ if there exists a mapping $\mc S: \bm o_B\mapsto\bm o_A$ independent of $\Lambda$ such that $\Pr_A[\bm o_A |\Lambda] = \sum_{\bm o_B:\mc S(\bm o_B)=\bm o_A}\Pr_B[\bm o_B|\Lambda]$ for all $\bm o_A$ and $\Lambda$.
\begin{proposition}\label{prop:equivalence}
    Any separable scheme can be simulated by a classical-memory-assisted scheme using the same number of samples and vice versa.
\end{proposition}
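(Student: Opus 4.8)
The plan is to prove the two inclusions separately, showing each scheme can simulate the other.

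\medskip
\textbf{Separable $\Rightarrow$ classical-memory-assisted.}
First I would take a separable scheme with ancilla $\mc H_A$ and rewrite every separable object in its Kraus-like separable form: the initial state $\rho_0=\sum_j\sigma_j^A\otimes\gamma_j^S$, each processing channel $\mc C_t=\sum_j\mc A_{t,j}^A\otimes\mc B_{t,j}^S$, and each POVM element $E_k=\sum_j M_{k,j}^A\otimes N_{k,j}^S$. The key observation is that $\Lambda$ never touches $\mc H_A$, so the $A$-system evolves purely classically once we ``unravel'' the tensor-product sums. Concretely, I would introduce a classical register that records, at each step $t$, the index $j$ appearing in the decomposition of $\mc C_t$ (and of $\rho_0$ at the start). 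Given the recorded history $\bm o_{<t}$, the $S$-side operation applied at step $t$ is the fixed CPTNI map $\mc B_{t,o_t}^S$, possibly renormalized; the probability weight of choosing index $o_t$ is dictated by the $A$-side trace $\Tr[\mc A_{t,o_t}^A(\cdots)]$, which is a number computable from the $A$-side history alone (since the $A$-register is only acted on by the $\mc A$'s). I would make this precise by defining, for each fixed history, the unnormalized $A$-state $\varrho^A_{\bm o_{\le t}}$ obtained by composing the chosen $\mc A$'s on the chosen $\sigma_{o_0}$, and the corresponding $S$-side CPTNI map; the quantum instrument at step $t$ outputs $o_t$ with the weight $\Tr[\mc A_{t,o_t}^A(\varrho^A_{\bm o_{<t}})]$ absorbed into a rescaled $\mc B$. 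The final POVM $\{E_k\}$ is handled the same way: after recording the full history we know which $M_{k,j}^A$ weight and which $N_{k,j}^S$ to measure, and the simulation map $\mc S$ just forgets the history and outputs $k$. One must check the resulting $S$-side instruments are genuinely CPTP as a whole — this follows because the original $\mc C_t$ is CPTP and the $A$-side weights form a sub-probability distribution summing correctly. The number of channel uses is unchanged.

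\medskip
\textbf{Classical-memory-assisted $\Rightarrow$ separable.}
For the converse I would simply use the classical registers themselves as (a piece of) the ancilla $\mc H_A$: take $\mc H_A=\bigotimes_{t=0}^{N-1}\mc H_{A_t}$ where each $\mc H_{A_t}$ has an orthonormal basis $\{\ket{o_t}\}$. The initial state $\rho_0=\sum_{o_0}\ketbra{o_0}{o_0}^{A_0}\otimes\rho_{o_0}^S$ is already manifestly separable. Each quantum instrument $\mc C^{\bm o_{<t}}(\rho)=\sum_{o_t}\ketbra{o_t}{o_t}^{A_t}\otimes\mc C_{o_t}^{\bm o_{<t}}(\rho)^S$ — together with the ``read the control registers'' step — can be written as a separable channel on $\mc H_A\otimes\mc H_S$: it is a sum over $\bm o_{<t}$ and $o_t$ of (project $A_{<t}$ onto $\ketbra{\bm o_{<t}}{\bm o_{<t}}$ and write $\ketbra{o_t}{o_t}$ on $A_t$) $\otimes$ $\mc C_{o_t}^{\bm o_{<t}}$ on $S$, each term a tensor product of CPTNI maps. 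The adaptive final POVM $\{E^{o_{<N}}_{o_N}\}$ becomes the separable POVM with elements $\sum_{\bm o_{<N}}\ketbra{\bm o_{<N}}{\bm o_{<N}}^A\otimes E^{\bm o_{<N}}_{o_N}{}^S$, indexed by $(\bm o_{<N},o_N)$, which one verifies sums to identity. The simulation map is the identity on outcomes. Again the channel count is identical.

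\medskip
The main obstacle is the bookkeeping in the first direction: one has to be careful that the ``probabilities'' being shuffled between the $A$-side and $S$-side decompositions are consistently normalized so that the constructed object is a legitimate quantum instrument (CPTP overall, with CPTNI components), rather than merely an unnormalized gadget. The cleanest way I would handle this is to avoid explicit renormalization altogether: keep the $S$-side maps sub-normalized exactly as the CPTNI components $\mc B_{t,j}^S$ are, and verify that $\sum_{o_t}\Tr[\mc C_{o_t}^{\bm o_{<t}}(\rho)]=\Tr[\rho]$ follows from $\sum_j \mc A_{t,j}^A\otimes\mc B_{t,j}^S$ being trace-preserving after tracing out $A$ against the (normalized-conditional-on-history) $A$-state. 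Everything else — the inductive definition of the conditional $A$-states, the matching of outcome probabilities term by term, the construction of $\mc S$ — is then routine composition of linear maps.
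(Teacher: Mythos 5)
Your proposal is correct and follows essentially the same route as the paper's proof: the converse direction embeds the classical registers as a separable ancilla exactly as the paper does, and the forward direction rests on the same key observation that the $A$-side factor in the unraveled tensor-product expansion is independent of $\Lambda$, so its conditional trace weights can be precomputed and absorbed into rescaled CPTNI components $\mc B_{t,j}$ on $S$ (the paper writes these weights explicitly as ratios $\Tr[(\mc A_{t,j_t}\cdots\mc A_{1,j_1})(\sigma_{j_0})]/\Tr[(\mc A_{t-1,j_{t-1}}\cdots\mc A_{1,j_1})(\sigma_{j_0})]$, which is your normalized-conditional-on-history $A$-state formulation). Your closing normalization check, that $\sum_{o_t}\Tr[\mc C_{o_t}^{\bm o_{<t}}(\rho)]=\Tr[\rho]$ follows from trace preservation of $\sum_j\mc A_{t,j}\otimes\mc B_{t,j}$, is precisely what the paper's "one can verify that $\mc C^{j_{<t}}$ is CPTP" amounts to.
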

\begin{proof}
    A classical-memory-assisted scheme can obviously be simulated by a separable scheme using the same number of samples, which is clear from Fig.~1 in the main text. Formally, choose the ancillary system to include all the classical registers, $\mc H_A = \otimes_{t=0}^{N-1}\mc H_{A_t}$, and choose the processing channel to be
    \begin{equation}
        \mc C_t(\cdot)\coleq \sum_{\bm o_{\le t}}
        \left(\bigotimes_{k=0}^{t-1}\ketbra{o_k}{o_k}(\cdot)\ketbra{o_k}{o_k}^{A_k}\right)\otimes \Tr_{A_t}(\cdot)\ketbra{o_t}{o_t}^{A_t}\otimes\mathds 1^{A_{t+1:N-1}}\otimes\mc C_{o_t}^{\bm o_{<t}}(\cdot)^S
    \end{equation}    
    which are separable channels between $A$ and $S$. In word, the $t$-th processing channel picks up the quantum instrument $\mc C^{\bm o_{<t}}$ according to $A_{0:t-1}$, writes down the outcome to $A_t$, and does nothing to $A_{t+1:N}$. Alternatively, this channel can be written in the PTM representation as
    \begin{equation}
        \mc C_t^{AS} = \sum_{\bm o_{\le t}}\left(\bigotimes_{k=0}^{t-1}\lketbra{o_k}{o_k}^{A_k}\right)\otimes\lketbra{o_t}{I}^{A_t}\otimes \mathds 1^{A_{t+1:N-1}}\otimes \mc C_{o_t}^{\bm o_{<t}}{}^S. 
    \end{equation}
    where $\lket{o_k}$ is defined to be the vectorization of the computational basis state $\ketbra{o_k}{o_k}$.
    
    \smallskip
    \noindent
    Additionally, choose the initial state as
    $$\sum_{o_0}\ketbra{o_0}{o_0}^{A_0}\otimes \rho_\mr{junk}^{A_{1:N-1}}\otimes\rho_{o_0}^S,$$
    where $\rho_\mr{junk}$ can be any quantum state,
    and the final POVM as 
    $$\left\{\left(\bigotimes_{t=0}^{N-1}\ketbra{o_t}{o_t}^{A_t}\right)\otimes\left(E_{o_N}^{o_{<N}}\right)^S\right\}_{\bm o}.$$ 
    One can verify the outcome distribution will be the same for both schemes.

    \medskip

    Now we prove the opposite direction. For a generic separable scheme, expand the outcome distribution,
    \begin{equation}
        \Pr[k|\Lambda] =\sum_{\bm j_{0:N}}\Tr\left[M_{k,j_N}(\mc A_{N-1,j_{N-1}}\cdots \mc A_{1,j_1})(\sigma_{j_0})\right]\Tr\left[N_{k,j_N}(\Lambda \mc B_{N-1,j_{N-1}}\cdots\Lambda\mc B_{1,j_1}\Lambda)(\gamma_{j_0})\right].
    \end{equation}
    A crucial observation is that, the first factor is independent of $\Lambda$ conditioned on $\bm j_{0:N}$. Therefore, instead of keeping track of the quantum state in the ancillary system, we just need to keep track of the index $\bm j$ using classical registers.
    Formally, we can design a classical-memory-assisted scheme with the following processing channels
    \begin{equation}
        \mc C^{j_{<t}}(\rho^S)\coleq \sum_{j_t}\ketbra{j_t}{j_t}^{A_t}\otimes\frac{\Tr\left[(\mc A_{t,j_{t}}\mc A_{t-1,j_{t-1}}\cdots \mc A_{1,j_1})(\sigma_{j_0})\right]}{\Tr\left[(\mc A_{t-1,j_{t-1}}\cdots \mc A_{1,j_1})(\sigma_{j_0})\right]}\mc B_{t,j_{t}}(\rho^S).
    \end{equation}
    One can verify that $\mc C^{j_{<t}}$ is CPTP. Additionally, choose the initial state to be 
    $$\rho_0 = \sum_{j_0}\ketbra{j_0}{j_0}^{A_0}\otimes\Tr[\sigma_{j_0}]\gamma_{j_0},$$ 
    and the final POVM measurement to be 
    \begin{equation}
        E_{k,j_{N}}^{j_{<N}} \coleq \frac{\Tr\left[M_{k,j_N}(\mc A_{N-1,j_{N-1}}\cdots \mc A_{1,j_1})(\sigma_{j_0})\right]}{\Tr\left[(\mc A_{N-1,j_{N-1}}\cdots \mc A_{1,j_1})(\sigma_{j_0})\right]}\cdot N_{k,j_N}.
    \end{equation}
    The outcome distribution can then be computed as
    \begin{equation}
        \Pr[k,j_{0:N}|\Lambda] =\Tr\left[M_{k,j_N}(\mc A_{N-1,j_{N-1}}\cdots \mc A_{1,j_1})(\sigma_{j_0})\right]\Tr\left[N_{k,j_N}(\Lambda \mc B_{N-1,j_{N-1}}\cdots\Lambda\mc B_{1,j_1}\Lambda)(\gamma_{j_0})\right].
    \end{equation}
    By tracing out the classical register $A_{0:N}$, we retrieve the distribution of the separable scheme. This means any separable scheme can be simulated by a classical-memory-assisted scheme using the same number of samples, which completes our proof. 
\end{proof}

\noindent \textbf{Sample complexity vs. Number of measurements.}\quad
Since the separable schemes and classical-memory-assisted schemes are equivalent in terms of sample complexity $N_\mr{samp}$, we will call both of them entanglement-free schemes. However, for the classical-memory-assisted schemes, there is an alternative figure of merit that is practically interesting, which we call \emph{the number of measurements}, defined as follows
\begin{definition}\label{def:Nmeas}
    The {number of measurements}, $N_\mr{meas}$, of a classical-memory-assisted scheme is defined to be the number of non-trivial quantum instruments used in the worst case (where the final POVM measurement also counts).
    Here, a quantum instrument 
    $\mc C^{S\to AS}=\sum_{t}\ketbra{t}{t}^A\otimes \mc C_{t}^{S}$
    is called trivial if every $\mc C_t^S$ is proportional to some $\mr{CPTP}$ map.
\end{definition}
\noindent The reason why we call such quantum instruments trivial is that they are not extracting any information from the input, hence not doing a measurement. Indeed, the probability of seeing $\ket t$ is independent of the input state for such quantum instruments.
We also emphasize that, for different measurement outcome sequence there can be different numbers of non-trivial quantum instruments, and our definition focus on the worst case.

By definition, $N_\mr{samp}\ge N_\mr{meas}$, as each $\Lambda$ is followed by one quantum instrument (or the final POVM measurement). The lower bound for the latter thus implies one for the former. In the following, when we talk about the number of measurements of an entanglement-free schemes, we are specifically referring to the classical-memory-assisted schemes.

\section{Tight bounds on Pauli channel learning}\label{app:bound}

In this section, we present our main technical results, stated in Theorem~\ref{th:main_SM}.

\begin{theorem}\label{th:main_SM}
    If there exists a classical-memory-assisted scheme that, for any $n$-qubit Pauli channel $\Lambda$, outputs an estimator $\widehat\lambda_a$ such that 
    $|\widehat\lambda_a-\lambda_a|\le\varepsilon\le1/6$ with probability at least $2/3$ for any $a\in{\sf P}^n$, after making $N$ rounds of measurement, then $N=\Omega(2^{n}/\varepsilon^2)$.
\end{theorem}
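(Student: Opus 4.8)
The plan is to follow the Le Cam two-point strategy sketched in the main text, but to carry out the TVD bound carefully for classical-memory-assisted schemes where channel concatenation and mid-circuit measurements are allowed. First I would fix a target Pauli label $a\in{\sf P}^n$ and construct two hypotheses: a ``null'' channel $\Lambda_0$ (say the identity channel, $\lambda_b=1$ for all $b$) and a ``perturbed'' family. Following \cite{chen2022quantum}, the natural construction is a mixture: draw a random sign pattern and set $\lambda_b = 1 - 2\gamma$ or $1$ on a random subset while keeping $\lambda_a$ itself biased by $\pm(\text{something} \gtrsim \varepsilon)$ away from its null value, with $\gamma\asymp\varepsilon$; one must check that the resulting $\{\lambda_b\}$ corresponds to a genuine CPTP Pauli channel (positivity of the implied $p_a$ via the Walsh--Hadamard transform), which is where the restriction $\varepsilon\le 1/6$ enters. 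Any scheme that estimates $\lambda_a$ to precision $\varepsilon$ with probability $2/3$ can distinguish the two hypotheses, so the TVD between the induced outcome distributions $\Pr[\bm o|\Lambda_0]$ and $\E_{\Lambda}\Pr[\bm o|\Lambda]$ must be $\Omega(1)$.

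The core is then to upper-bound that TVD in terms of $N$ (the number of measurement rounds). Since mid-circuit instruments and concatenation destroy the clean ``one likelihood ratio per copy'' structure used in \cite{huang2022quantum,chen2022exponential}, I would set up an induction over the $N$ measurement steps on the (sub-normalized, hypothesis-averaged) mid-circuit state. Concretely, track the PTM-vector of the post-$t$-th-measurement state conditioned on the outcome record $\bm o_{\le t}$, under $\Lambda_0$ versus under the random perturbation, and control the quantity $\lbraket{\cdot}{\cdot}$-type inner products / a likelihood-ratio moment between the two branches. The key structural fact to exploit is that $\Lambda$ is diagonal in the Pauli basis, so conjugating by $\Lambda$ (or any concatenation $\Lambda^m$, which just replaces $\lambda_b$ by $\lambda_b^m$) acts diagonally on the Pauli components of the mid-circuit state; a processing instrument (a CPTNI map) can only move weight around among Pauli components without increasing total Hilbert--Schmidt weight. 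One then shows a recurrence of the form ``new TVD contribution $\le$ old $+ O(\varepsilon^2/2^n)$ per measurement'' — the $2^{-n}$ coming from the fact that, for a random perturbation supported pseudorandomly over the $4^n$ Pauli labels, the mid-circuit state (a density operator, hence HS-norm $\le 1$ after normalization) can have large overlap with at most $O(2^n)$ of the relevant Pauli directions. Summing the recurrence over $t=1,\dots,N$ and using Pinsker/Cauchy--Schwarz to pass from the second moment of the likelihood ratio to TVD, we get $\mathrm{TVD}\le O(\sqrt{N\varepsilon^2/2^n})$, which combined with the $\Omega(1)$ lower bound yields $N=\Omega(2^n/\varepsilon^2)$.

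The main obstacle I anticipate is making the induction step rigorous when the quantum instrument $\mc C^{\bm o_{<t}}$ is adaptive and trace-non-increasing: one needs the right potential function that (i) is monotone (non-increasing) under arbitrary CPTNI processing applied identically in both hypotheses, (ii) grows by at most $O(\varepsilon^2/2^n)$ when a copy of $\Lambda$ (or a concatenated block $\Lambda^m$) is inserted between measurements, and (iii) dominates the squared TVD at the end. Getting a bound that is uniform over the number of concatenated copies between two measurements — so that concatenation genuinely does not help — is the delicate point; the saving grace is that $\lambda_b^m$ for $|\lambda_b|<1$ only shrinks, so concatenation can only weaken the signal, but one has to phrase this so that it interacts correctly with the averaging over the random hypothesis. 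A secondary technical point is ensuring the two-point construction is symmetric enough that the first (outcome-record-dependent, $\Lambda$-independent) factors cancel, exactly as in the proof of Proposition~\ref{prop:equivalence}, so that only the $\Lambda$-dependent Hilbert--Schmidt overlaps need to be tracked. Once the recurrence is established, the remaining arithmetic (choosing $\gamma\asymp\varepsilon$, verifying CPTP for $\varepsilon\le1/6$, and extracting the constant $\gtrsim 0.01$) is routine.
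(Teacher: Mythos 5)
Your high-level strategy (Le Cam's two-point method, averaging over a random Pauli direction to get the $2^{-n}$ factor, and an induction over measurement rounds to handle the hypothesis-dependence of the mid-circuit state) is the same as the paper's. However, there are two genuine problems. First, your two-point construction is wrong in a way that matters. You propose the identity channel as the null hypothesis with a perturbation of order $\varepsilon$ in the eigenvalues. A symmetric $\pm\varepsilon$ bias of $\lambda_a$ around $1$ is impossible ($\lambda_a\le 1$), and the $\pm$ symmetry is needed so that the first-order term in $\varepsilon$ cancels upon averaging over the sign, which is what produces the $1/\varepsilon^2$ rather than $1/\varepsilon$ dependence. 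More fundamentally, with eigenvalues near $1$, concatenating $m$ copies between measurements amplifies the signal, since $1-(1-2\varepsilon)^m\approx 2m\varepsilon$; a scheme that concatenates $O(1/\varepsilon)$ copies per measurement would then distinguish your hypotheses with only $O(2^n)$ measurements, so an identity-based construction cannot yield $\Omega(2^n/\varepsilon^2)$ for the number of measurements. The paper instead takes the null to be the completely depolarizing channel and the alternatives to be $\Lambda_{a,\pm}=\lketbra{\sigma_0}{\sigma_0}\pm\varepsilon_0\lketbra{\sigma_a}{\sigma_a}$: CPTP is immediate ($p_c=(1\pm\varepsilon_0)/4^n$), concatenation only shrinks the signal ($(\pm\varepsilon_0)^m$), and the depolarizing base makes $p_0(\bm o)$ factor out so that the likelihood ratio becomes an explicit product $\prod_t\bigl(1+s\varepsilon_0\Tr(E_{o_t}^{(\bm o_{<t})}P_a)\Tr(P_a\rho_{a,s}^{(\bm o_{<t})})/\Tr(E_{o_t}^{(\bm o_{<t})})\bigr)$.

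Second, the core of the argument --- your ``potential function'' satisfying properties (i)--(iii) --- is named but never constructed, and this is where essentially all the work lies. The paper tracks $\mu_{a,\pm}^{(\bm o_{<t})}=\Tr(P_a\rho_{a,\pm}^{(\bm o_{<t})})$ through an explicit recurrence $\mu_{a,\pm}'=(c_{a,0}\pm\varepsilon_0\mu_{a,\pm}c_{a,a})/(c_{0,0}\pm\varepsilon_0\mu_{a,\pm}c_{0,a})$ in terms of the PTM entries of the (adaptive, trace-non-increasing) instrument, proves $|c_{a,b}|\le c_{0,0}$ by Cauchy--Schwarz on the Kraus operators, and inducts to get $\E_{a\ne0}\mu_{a,\pm}^2\le O(2^n/4^n)$ and hence $\E_{a\ne0}\Tr^2(P_a\Delta_a^{(\bm o_{<t})})\le O(\varepsilon^2\,2^n/4^n)$ for the difference $\Delta_a$ between the two sign branches; this, plus a Pauli-twirling bound on the averaged-state term, gives a per-measurement TVD contribution of $O(\varepsilon^2 2^n/4^n)$ directly (no Pinsker step is needed). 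Without exhibiting this recurrence and the bound on the instrument's PTM entries, the claim that each measurement contributes only $O(\varepsilon^2/2^n)$ remains an assertion, so the proposal as written does not constitute a proof.
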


\begin{proof}
We first prove a lower bound for the sample complexity, and then strengthen it to the number of measurements.
Define the following set of Pauli channels
\begin{equation}
\begin{aligned}
    \Lambda_{a,\pm}&\coleq \lket{\sigma_0}\lbra{\sigma_0} \pm \varepsilon_{0}\lket{\sigma_a}\lbra{\sigma_a},\quad P_a\ne I;\\
    \Lambda_0&\coleq\lket{\sigma_0}\lbra{\sigma_0}.
\end{aligned}
\end{equation}
Here we will set $\varepsilon_0\le1/3$.
$\Lambda_0$ is known as the completely depolarizing channel. To see $\Lambda_{a,\pm}$ is CPTP, we just need to check the non-negativity of the associated Pauli error rates
\begin{equation}
    p_c = \frac{1}{4^n}\sum_b(-1)^\expval{c,b}\lambda_b = \frac{1}{4^n}(1\pm\varepsilon_0)\ge0,\quad \forall c\in {\sf P}^n.
\end{equation}
A learning algorithm satisfying Theorem~\ref{th:main_SM} with $\varepsilon=\varepsilon_0/2$ is able to distinguish any one of these Pauli channels with high success probability.
Now, we define a partially-revealed hypothesis-testing task similar to Ref.~\cite{huang2022quantum}:
A referee first samples an $a\in{\sf P}^n\backslash I$ and $s=\pm1$ according to uniform distribution. Then conducts one of the following actions with equal probability:
\begin{enumerate}
    \item Sending $N$ copies of $\Lambda_0$ to the player.
    \item Sending $N$ copies of $\Lambda_{a,s}$ to the player. 
\end{enumerate}
The player is then asked to measure the $N$ copies of channels with any schemes. After the measurement has been completed, the referee reveals the value of $a$ to the player. The player is now asked to guess which action the referee has taken, based on the obtained measurement outcome. Crucially, the player must completes all quantum measurement before the reveal of $a$, and can only do classical post-processing after that.

Suppose there exists a separable scheme satisfying the assumption of Theorem~\ref{th:main_SM}. Then the player can win the game with probability $2/3$ for any $(a,s)$: Just by querying $\lambda_{a}$ after the referee revealed the value of $a$, the player can distinguish among $\{\Lambda_{a,-},\Lambda_{0},\Lambda_{a,+}\}$ with 2/3 chance. 
Denote the measurement outcome distribution for $\Lambda_0$, $\Lambda_{a,\pm}$ as $p_0$, $p_{a,\pm}$, respectively.
Based on Le Cam's two-point method~\cite{lecam1973convergence}, we must have
\begin{equation}\label{eq:lower_tvd}
    \E_{a\ne 0}\mr{TVD}(p_0,~\mbb E_{s=\pm 1}[p_{a,s}]) \ge 
 (1-2*\frac13) = \frac{1}{3}.
\end{equation}
Now we compute the L.H.S., \textit{i.e.}, the average TVD. 
\begin{equation}
\E_{a\ne 0}\mr{TVD}(p_0,~\mbb E_{s=\pm 1}[p_{a,s}]) 
    = \mbb E_{a\ne 0}\sum_{\bm o}\max\left\{0, p_0(\bm o) - \mbb E_s p_{a,s}(\bm o)\right\}.
\end{equation}
We need some additional notations on the classical-memory-assisted scheme. Focusing on the $t$-th step, for every CP trace-non-increasing map $\mc C_{o_t}^{\bm o_{<t}}$, choosing any Kraus representation $\mc C_{o_t}^{\bm o_{<t}}(\cdot) = \sum_j K_j(\cdot)K_j^\dagger$~\cite{choi1975completely}, we define $E^{(\bm o_{<t})}_{o_t}\coleq \sum_jK_j^\dagger K_j$ as the associated POVM element. Indeed,
\begin{equation}\label{eq:effective_POVM}
    \Tr[E_{o_t}^{(\bm o_{<t})}\rho] = \Tr[\mc C_{o_t}^{\bm o_{<t}}(\rho)]. 
\end{equation}
On the other hand, we denote the state fed into the $t$-th $\Lambda_{a,s}$ as $\rho_{a,s}^{(\bm o_{<t})}$, which satisfies the following recurrence relation by definition
\begin{equation}
    \rho_{a,s}^{(\bm o_{<t+1})}\coleq \frac{\mc C_{o_t}^{\bm o_{<t}}\circ \Lambda(\rho_{a,s}^{(\bm o_{<t})})}{\Tr[\mc C_{o_t}^{\bm o_{<t}}\circ \Lambda(\rho_{a,s}^{(\bm o_{<t})})]},
\end{equation}
and the initial condition $\rho_{a,s}^{(\bm o_{<1})}\equiv\rho_\mr{init}$. Here we assume $\rho_\mr{init}$ to be fixed rather than drawing from an ensemble, without loss of generality, as the total variation distance is joint convex.

\medskip

Now, focus on the probability distribution difference,
\begin{align}
    &p_0(\bm o) - \mbb E_s p_{a,s}(\bm o) \\
    =~&\prod_{t=1}^N\lbra{E^{(\bm o_{<t})}_{o_t}}\Lambda_{0}\lket{\rho_{0}^{(\bm o_{<t})}} -  \mbb E_s \prod_{t=1}^N\lbra{E^{(\bm o_{<t})}_{o_t}}\Lambda_{a,s}\lket{\rho_{a,s}^{(\bm o_{<t})}}\\
    =~& \prod_{t=1}^N\frac{1}{2^n}\lbraket{E^{(\bm o_{<t})}_{o_t}}{I}\lbraket{I}{\rho_{0}^{(\bm o_{<t})}} -  \mbb E_s\prod_{t=1}^N\frac{1}{2^n}\left(\lbraket{E^{(\bm o_{<t})}_{o_t}}{I}\lbraket{I}{\rho_{a,s}^{(\bm o_{<t})}}+s\varepsilon_{0}\lbraket{E^{(\bm o_{<t})}_{o_t}}{P_a}\lbraket{P_a}{\rho_{a,s}^{(\bm o_{<t})}}\right)\\
    =~&\left(\prod_{t=1}^N \frac{1}{2^n}\Tr(E_{o_t}^{(\bm o_{<t})})\right)\left(1 - \mbb E_s\left(\prod_{t=1}^N\left(1 + s\varepsilon_{0}\frac{\Tr(E^{(\bm o_{<t})}_{o_t}{P_a})\Tr({P_a}{\rho_{a,s}^{(\bm o_{<t})})}}{\Tr(E_{o_t}^{(\bm o_{<t})})}\right)\right)\right)\\
    =~& p_0(\bm o)\left(1 - \mbb E_s\left(\prod_{t=1}^N\left(1 + s\varepsilon_{0}\frac{\Tr(E^{(\bm o_{<t})}_{o_t}{P_a})\Tr({P_a}{\rho_{a,s}^{(\bm o_{<t})})}}{\Tr(E_{o_t}^{(\bm o_{<t})})}\right)\right)\right)\label{eq:diff}
\end{align}
Here $\rho_{a,s}^{(\bm o_{<t})}$ is the post-measurement state after the ($t-1$)th measurement, conditioned on the outcome being $\bm o_{1:t-1}$.
Now we try to bound the following term,
\begin{align}
    &\mbb E_s\left(\prod_{t=1}^N\left(1 + s\varepsilon_{0}\frac{\Tr(E^{(\bm o_{<t})}_{o_t}{P_a})\Tr({P_a}{\rho_{a,s}^{(\bm o_{<t})})}}{\Tr(E_{o_t}^{(\bm o_{<t})})}\right)\right)\\
    =~& \E_s\exp\left(\sum_{t=1}^N\log\left(1 + s\varepsilon_{0}\frac{\Tr(E^{(\bm o_{<t})}_{o_t}{P_a})\Tr({P_a}{\rho_{a,s}^{(\bm o_{<t})})}}{\Tr(E_{o_t}^{(\bm o_{<t})})}\right)\right)\\
    \ge~&\exp\left(\frac12\sum_{t=1}^N\sum_{s=\pm1}\log\left(1 + s\varepsilon_{0}\frac{\Tr(E^{(\bm o_{<t})}_{o_t}{P_a})\Tr({P_a}{\rho_{a,s}^{(\bm o_{<t})})}}{\Tr(E_{o_t}^{(\bm o_{<t})})}\right)\right) \eqcol \bigstar,
\end{align}
where the second line uses the fact that each term in the product is non-negative, and the third line uses Jensen's inequality. Now we define
\begin{equation}
    \bar\rho_a^{(\bm o_{<t})} = \frac12(\rho_{a,+}^{(\bm o_{<t})} + \rho_{a,-}^{(\bm o_{<t})}),\quad 
    \Delta_a^{(\bm o_{<t})} = \frac12(\rho_{a,+}^{(\bm o_{<t})} - \rho_{a,-}^{(\bm o_{<t})}).
\end{equation}
Then above inequality can then be expressed as
\begin{align}
    \bigstar &= \exp\left(\frac12\sum_{t=1}^N\log\left[\left(1 + \varepsilon_{0}\frac{\Tr(E^{(\bm o_{<t})}_{o_t}{P_a})\Tr({P_a}{\rho_{a,+}^{(\bm o_{<t})})}}{\Tr(E_{o_t}^{(\bm o_{<t})})}\right)\left(1 - \varepsilon_{0}\frac{\Tr(E^{(\bm o_{<t})}_{o_t}{P_a})\Tr({P_a}{\rho_{a,-}^{(\bm o_{<t})})}}{\Tr(E_{o_t}^{(\bm o_{<t})})}\right)\right]\right) \\
    &= \exp\left(\frac12\sum_{t=1}^N\log\left[
    1 - \left( \varepsilon_{0}^2\frac{\Tr^2(E_{o_t}^{(\bm o_{<t})}P_a)}{\Tr^2(E_{o_t}^{(\bm o_{<t})})}\left(\Tr^2(P_a\bar\rho_{a}^{(\bm o_{<t})})-\Tr^2(P_a\Delta_{a}^{(\bm o_{<t})})\right) - 
    2\varepsilon_{0}\frac{\Tr(E^{(\bm o_{<t})}_{o_t}{P_a})\Tr({P_a}{\Delta_{a}^{(\bm o_{<t})})}}{\Tr(E_{o_t}^{(\bm o_{<t})})}\right)
    \right]\right)\\
    &\ge \exp\left(\frac12\sum_{t=1}^N\log\left[
    1 - \left( \varepsilon_{0}^2\frac{\Tr^2(E_{o_t}^{(\bm o_{<t})}P_a)}{\Tr^2(E_{o_t}^{(\bm o_{<t})})}\Tr^2(P_a\bar\rho_{a}^{(\bm o_{<t})}) + 2\varepsilon_{0}\left|\frac{\Tr(E^{(\bm o_{<t})}_{o_t}{P_a})\Tr({P_a}{\Delta_{a}^{(\bm o_{<t})})}}{\Tr(E_{o_t}^{(\bm o_{<t})})}\right|\right)
    \right]\right)\\
    &\ge\exp\left(
    - \sum_{t=1}^N\left( \varepsilon_{0}^2\frac{\Tr^2(E_{o_t}^{(\bm o_{<t})}P_a)}{\Tr^2(E_{o_t}^{(\bm o_{<t})})}\Tr^2(P_a\bar\rho_{a}^{(\bm o_{<t})})+ 2\varepsilon_{0}\left|\frac{\Tr(E^{(\bm o_{<t})}_{o_t}{P_a})\Tr({P_a}{\Delta_{a}^{(\bm o_{<t})})}}{\Tr(E_{o_t}^{(\bm o_{<t})})}\right|\right)
    \right)\\
    &\ge 1 - \sum_{t=1}^N\left( \varepsilon_{0}^2\frac{\Tr^2(E_{o_t}^{(\bm o_{<t})}P_a)}{\Tr^2(E_{o_t}^{(\bm o_{<t})})}\Tr^2(P_a\bar\rho_{a}^{(\bm o_{<t})}) + 2\varepsilon_{0}\left|\frac{\Tr(E^{(\bm o_{<t})}_{o_t}{P_a})\Tr({P_a}{\Delta_{a}^{(\bm o_{<t})})}}{\Tr(E_{o_t}^{(\bm o_{<t})})}\right|\right).
\end{align}
The fourth line uses the fact that $\log(1-x)\ge -2x$ for $x\in[0,0.79]$ and that the terms inside the inner braket is upper bounded by $\varepsilon_{0}^2 +2\varepsilon_{0} \le 0.78$. The last line uses $\exp(-x)\ge1-x$ for any $x$. Now substitute this back to the expression of avergae TVD.
\begin{align}
    &\E_{a\ne 0}\mr{TVD}(p_0,~\mbb E_{s=\pm 1}[p_{a,s}]) \\
    \le~& \E_{a\ne 0}\sum_{\bm o}\max\left\{0, p_0(\bm o)\sum_{t=1}^N\left( \varepsilon_{0}^2\frac{\Tr^2(E_{o_t}^{(\bm o_{<t})}P_a)}{\Tr^2(E_{o_t}^{(\bm o_{<t})})}\Tr^2(P_a\bar\rho_{a}^{(\bm o_{<t})}) + 2\varepsilon_{0}\left|\frac{\Tr(E^{(\bm o_{<t})}_{o_t}{P_a})\Tr({P_a}{\Delta_{a}^{(\bm o_{<t})})}}{\Tr(E_{o_t}^{(\bm o_{<t})})}\right|\right)\right\}\\
    =~&\mbb E_{a\ne 0}\sum_{\bm o} p_0(\bm o)\sum_{t=1}^N\left( \varepsilon_{0}^2\frac{\Tr^2(E_{o_t}^{(\bm o_{<t})}P_a)}{\Tr^2(E_{o_t}^{(\bm o_{<t})})}\Tr^2(P_a\bar\rho_{a}^{(\bm o_{<t})}) + 2\varepsilon_{0}\left|\frac{\Tr(E^{(\bm o_{<t})}_{o_t}{P_a})\Tr({P_a}{\Delta_{a}^{(\bm o_{<t})})}}{\Tr(E_{o_t}^{(\bm o_{<t})})}\right|\right).\label{eq:improvable}
\end{align}
The last line is because the second argument of $\max$ is now nonnegative. 
Now we bound the two terms from above. For the first term, note that
\begin{align}
    \mbb E_{a\ne 0}\frac{\Tr^2(E_{o_t}^{(\bm o_{<t})}P_a)}{\Tr^2(E_{o_t}^{(\bm o_{<t})})}\Tr^2(P_a\bar\rho_{a}^{(\bm o_{<t})}) &\le \mbb E_{a\ne 0}\frac{\Tr^2(E_{o_t}^{(\bm o_{<t})}P_a)}{\Tr^2(E_{o_t}^{(\bm o_{<t})})}\\
    &=\frac{1}{4^n-1}\frac{\sum_a\Tr^2(E_{o_t}^{(\bm o_{<t})}P_a)-\Tr^2(E_{o_t}^{(\bm o_{<t})})}{\Tr^2(E_{o_t}^{(\bm o_{<t})})}\\
    &=\frac{1}{4^n-1}\frac{2^n\Tr({E_{o_t}^{(\bm o_{<t})}}^2)-\Tr^2(E_{o_t}^{(\bm o_{<t})})}{\Tr^2(E_{o_t}^{(\bm o_{<t})})}\\
    &\le \frac{2^n}{4^n-1}.
\end{align}
The third line is Pauli twirling. The last line uses the fact that $E_{o_t}^{(\bm o_{<t})}$ is positive semi-definite and that $l_2$-norm is upper bounded by $l_1$-norm.

\medskip

Here we pause to remark that, if channel concatenation were not allowed, the $t$-th input state $\rho_{a,s}^{\bm o_{<t}}$ would be independent of $(a,s)$ conditioned on $\bm o_{<t}$, and thus $\Delta_a^{(\bm o_{<t})} = 0$. 
In that case, we would only have the first term, and the proof would already be completed, similar to \cite{huang2022quantum,chen2022exponential}.
The major challenge in our setting is how to address the second term.

\medskip
\noindent For the second term, note that
\begin{align}
    \E_{a\ne 0}\left|\frac{\Tr(E^{(\bm o_{<t})}_{o_t}{P_a})\Tr({P_a}{\Delta_{a}^{(\bm o_{<t})})}}{\Tr(E_{o_t}^{(\bm o_{<t})})}\right|
    &\le \sqrt{\E_{a\ne0}\frac{\Tr^2(E_{o_t}^{(\bm o_{<t})}P_a)}{\Tr^2(E_{o_t}^{(\bm o_{<t})})}}\sqrt{\E_{a\ne0}\Tr^2(P_a\Delta_a^{(\bm o_{<t})})}\label{eq:cauchy-second}\\
    &\le \sqrt{\frac{2^n}{4^n-1}}\sqrt{\E_{a\ne0}\Tr^2(P_a\Delta_a^{(\bm o_{<t})})}.
\end{align}
The first line is Cauchy-Schwarz. The second line is the same as the derivation for the first term.
The remaining of the proof is to upper bound the following
\begin{align}
\E_{a\ne 0}\Tr^2(P_a\Delta_a^{(\bm o_{<t})}) &= \E_{a\ne 0}\Tr^2(P_a(\rho_{a,+}^{(\bm o_{<t})} - \rho^{(\bm o_{<t})}_{a,-})/2)
= \E_{a\ne 0}\left(\left(\mu_{a,+}^{(\bm o_{<t})}-\mu_{a,-}^{(\bm o_{<t})}\right)/2\right)^2.
\end{align}
where we've defined $\mu_{a,\pm}^{(\bm o_{<t})}\coleq \Tr(P_a\rho_{a,\pm}^{(\bm o_{<t})})$. We have the following recurrence relation
\begin{equation}
    \rho_{a,\pm}^{(\bm o_{<t+1})} = \frac{(\mc C_{o_t}^{(\bm o_{<t})}\circ\Lambda_{a,\pm})(\rho_{a,\pm}^{(\bm o_{<t})})}{\Tr\left[(\mc C_{o_t}^{(\bm o_{<t})}\circ\Lambda_{a,\pm})(\rho_{a,\pm}^{(\bm o_{<t})})\right]} = 
    \frac{\mc C_{o_t}^{(\bm o_{<t})}(I\pm\varepsilon_{0} \mu_{a,\pm}^{(\bm o_{<t})}P_a)}{\Tr\left[\mc C_{o_t}^{(\bm o_{<t})}(I\pm\varepsilon_{0} \mu_{a,\pm}^{(\bm o_{<t})}P_a)\right]},
\end{equation}
Now, let $c^{(\bm o_{\le t})}_{a,b}\coleq \Tr[P_a \mc C_{o_t}^{(\bm o_{<t})}(P_b)]/2^n$, which is the Pauli transfer matrix representation of $\mc C_{o_t}^{(\bm o_{<t})}$. Taking expectation value of $P_a$ on both side of the above equation, we get
\begin{equation}\label{eq:recurrence}
    \mu_{a,\pm}^{(\bm o_{<t+1})} =\frac{c_{a,0}^{(\bm o_{\le t})} \pm \varepsilon_{0}\mu_{a,\pm}^{(\bm o_{<t})}c_{a,a}^{(\bm o_{\le t})}}{c_{0,0}^{(\bm o_{\le t})} \pm \varepsilon_{0}\mu_{a,\pm}^{(\bm o_{<t})}c_{0,a}^{(\bm o_{\le t})}}.
\end{equation}

For clarity, we omit the superscript of $c_{a,b}$ and denote $\mu^{(\bm o_{<t})}_{a,\pm}\equiv\mu_{a,\pm}$, $\mu^{(\bm o_{<t+1})}_{a,\pm}\equiv\mu'_{a,\pm}$.
The difference between $\mu_{a,\pm}'$ satisfies
\begin{align}
    &\mu_{a,+}' - \mu_{a,-}'\\ 
    =~& \left(\frac{c_{a,0} }{c_{0,0} + \varepsilon_{0}\mu_{a,+}c_{0,a}} - \frac{c_{a,0} }{c_{0,0} - \varepsilon_{0}\mu_{a,-}c_{0,a}}\right) +\varepsilon_{0}\left(\frac{\mu_{a,+}c_{a,a} }{c_{0,0} + \varepsilon_{0}\mu_{a,+}c_{0,a}} + \frac{\mu_{a,-}c_{a,a} }{c_{0,0} - \varepsilon_{0}\mu_{a,-}c_{0,a}}\right)\\
    \equiv~& (X) + (Y),
\end{align}
where we denote the first and second braket as $X,Y$, respectively. Now we have that
\begin{align}
    |X| &\le \frac{|c_{a,0}|}{(1-\varepsilon_{0})c_{0,0}}-\frac{|c_{a,0}|}{(1+\varepsilon_{0})c_{0,0}} = \frac{2\varepsilon_{0}}{1-\varepsilon_{0}^2}\frac{|c_{a,0}|}{c_{0,0}}.
\end{align}
as $|\mu_{a,\pm}|\le 1$ and $|c_{0,a}|\le c_{0,0}$. To see the later, notice that every completely-positive map has a Kraus representation~\cite{choi1975completely}, thus
\begin{align}
    |c_{a,b}| &\equiv \left|\Tr[P_a\mc C_{o_t}(P_b/2^n)]\right| \\
    &\le\sum_j\left|\Tr[P_aK_jP_bK_j^\dagger]/2^n\right|\\
    &\le \sum_j \sqrt{\Tr[P_aK_jK_j^\dagger P_a^\dagger]}\sqrt{\Tr[P_bK_j^\dagger K_j P_b^\dagger]}/2^n\\
    &= \sum_j\Tr[K_jK_j^\dagger]/2^n = \Tr[\mc C_{o_t}(I/2^n)]\equiv c_{0,0},
\end{align}
where the third line is Cauchy-Schwarz. For the second term,
\begin{equation}
    |Y|\le \frac{\varepsilon_{0}(|\mu_{a,+}|+|\mu_{a,-}|)|c_{a,a}|}{(1-\varepsilon_{0})c_{0,0}}\le \frac{\varepsilon_{0}}{1-\varepsilon_{0}}(|\mu_{a,+}|+|\mu_{a,-}|),
\end{equation}
as $|c_{a,a}|\le c_{0,0}$ which has been proved.
Therefore,
\begin{align}
    \E_{a\ne 0}\Tr^2(P_a\Delta_a^{(\bm o_{<t})}) 
    &= \E_{a\ne0}((X+Y)/2)^2\\
    &\le \left(\E_{a\ne0}X^2 + \E_{a\ne0}Y^2\right)/2\\
    &\le \frac12(\frac{2\varepsilon_{0}}{1-\varepsilon_{0}^2})^2\E_{a\ne0}\frac{c_{a,0}^2}{c_{0,0}^2} + \frac12(\frac{\varepsilon_{0}}{1-\varepsilon_{0}})^2\E_{a\ne0}(|\mu_{a,+}|+|\mu_{a,-}|)^2\\
    &\le \frac12(\frac{2\varepsilon_{0}}{1-\varepsilon_{0}^2})^2\frac{2^n}{4^n-1} + \frac12(\frac{\varepsilon_{0}}{1-\varepsilon_{0}})^2\E_{a\ne0}(2\mu^2_{a,+}+2\mu^2_{a,-}).\label{eq:recurrence_Delta}
\end{align}
We use $(a+b)^2\le 2a^2 +2b^2$ in the second and last line. For the last line we also use 
\begin{align}
    \E_{a\ne0}\frac{c_{a,0}^2}{c_{0,0}^2} \equiv \frac{\E_{a\ne0}\Tr^2(P_a\mc C_{o_t}(I))}{\Tr^2(\mc C_{o_t}(I))}\le \frac{2^n}{4^n-1}\frac{\Tr\left((\mc C_{o_t}(I))^2\right)}{\Tr^2(\mc C_{o_t}(I))} \le \frac{2^n}{4^n-1},
\end{align}
The first and second inequalities are Pauli twirling and monotonicity of $l_p$ norm, respectively.
Now we just need to bound $\E_{a\ne0}\mu^2_{a,\pm}.$ Again, by the recurrence relation Eq.~\eqref{eq:recurrence}, we have
\begin{align}
    \E_{a\ne0}\left(\mu'_{a,\pm}\right)^2 &=
    \E_{a\ne0}\left(\frac{c_{a,0} \pm \varepsilon_{0}\mu_{a,\pm} c_{a,a}}{c_{0,0} \pm \varepsilon_{0}\mu_{a,\pm}c_{0,a}}\right)^2\\
    &\le \frac{\E_{a\ne 0}\left(c_{a,0} \pm \varepsilon_{0}\mu_{a,\pm} c_{a,a}\right)^2}{(1-\varepsilon_{0})^2c_{0,0}^2}\\
    &\le\frac{2\E_{a\ne 0}c_{a,0}^2 + 2\varepsilon_{0}^2\E_{a\ne 0}\mu_{a,\pm}^2c_{a,a}^2}{(1-\varepsilon_{0})^2c_{0,0}^2}\\
    &\le \frac{2}{(1-\varepsilon_{0})^2}\frac{2^n}{4^n-1} + \frac{2\varepsilon_{0}^2}{(1-\varepsilon_{0})^2}\E_{a\ne 0}\mu_{a,\pm}^2.
\end{align}
We also have the initial condition (note that $\rho_\mr{init}$ has no dependence on $a$)
\begin{equation}
    \E_{a\ne0}\Tr^2[P_a\rho_\mr{init}] \le \frac{2^n}{4^n-1} < \frac{2}{(1-\varepsilon_{0})^2}\frac{2^n}{4^n-1}.
\end{equation}
By induction, we have the following bound,
\begin{align}
    \E_{a\ne 0}\Tr^2[P_a\rho_{a,\pm}^{(\bm o_{<t})}] \equiv \E_{a\ne 0}(\mu_{a,\pm}^{(\bm o_{<t})})^2 &\le \frac{1-\left(\frac{2\varepsilon_{0}^2}{(1-\varepsilon_{0})^2}\right)^{t-1}}{1-\frac{2\varepsilon_{0}^2}{(1-\varepsilon_{0})^2}} \frac{2\cdot 2^{n}}{(1-\varepsilon_{0})^2(4^n-1)} \\
    &\le \frac{2}{1-2\varepsilon_{0}-\varepsilon_{0}^2}\cdot\frac{2^{n}}{4^n-1},
\end{align}
which holds since $2\varepsilon_{0}^2/(1-\varepsilon_{0})^2\le1$ for $\varepsilon_{0}\le1/3$.
Substitute this back to Eq.~\eqref{eq:recurrence_Delta},
\begin{align}
    \E_{a\ne0}\Tr^2(P_a\Delta_a^{(\bm o_{<t})}) &\le \frac12(\frac{2\varepsilon_{0}}{1-\varepsilon_{0}^2})^2\frac{2^n}{4^n-1} + \frac12(\frac{\varepsilon_{0}}{1-\varepsilon_{0}})^2\frac{8}{1-2\varepsilon_{0}-\varepsilon_{0}^2}\cdot\frac{2^n}{4^n-1}\\
    &= \frac12\left((\frac{2}{1-\varepsilon_{0}^2})^2+\frac{1}{(1-\varepsilon_{0})^2}\frac{8}{1-2\varepsilon_{0}-\varepsilon_{0}^2}\right)\cdot\frac{\varepsilon_{0}^2 2^n}{4^n-1}\\
    &\eqcol f(\varepsilon_{0})\cdot\frac{\varepsilon_{0}^2 2^n}{4^n-1},
\end{align}
where $f(\varepsilon_{0})=O(1)$ for $\varepsilon_{0}\le1/3$. 
Substitute this back to Eq.~\eqref{eq:cauchy-second} and Eq.~\eqref{eq:improvable}, we got the following TVD bound
\begin{equation}\label{eq:upper_tvd}
\begin{aligned}
    \E_{a\ne 0}\mr{TVD}(p_0,\E_{s=\pm1}[p_{a,s}]) &\le    \sum_{\bm o} p_0(\bm o)\sum_{t=1}^N\left(\varepsilon_{0}^2\frac{2^n}{4^n-1} + 2\varepsilon_{0}^2\frac{2^n}{4^n-1}\sqrt{f(\varepsilon_{0})}\right)\\
    &= N\left(\varepsilon_{0}^2\frac{2^n}{4^n-1} + 2\varepsilon_{0}^2\frac{2^n}{4^n-1}\sqrt{f(\varepsilon_{0})}\right),
\end{aligned}
\end{equation}
which, combined with Eq.~\eqref{eq:lower_tvd}, gives the following sample complexity lower bound
\begin{equation}\label{eq:bound_with_f}
    N\ge \frac{1}{3(1+2\sqrt{f(\varepsilon_{0})})}\frac{4^n-1}{2^n\varepsilon_{0}^2} =
    \frac{1}{12(1+2\sqrt{f(2\varepsilon)})}\frac{4^n-1}{2^n\varepsilon^2}.
\end{equation}
Finally, note that $f(2\varepsilon)<44$ for $\varepsilon\in[0,1/6]$, thus we have
\begin{equation}
    N\ge 0.005 \cdot \frac{4^n-1}{2^n\varepsilon^2}= \Omega(2^n/\varepsilon^2).
\end{equation}
Now we explain how this can be strengthened to a lower bound for $N_\mr{meas}$ (i.e., the number of measurements).
Suppose the quantum instrument $\mc C^{\bm o_{<t}}$ is trivial in the sense of Definition~\ref{def:Nmeas}. Then, the effective POVM elements $E_{o_t}^{\bm o_{<t}}$ for all $o_t$ will be proportional to $I$, according to the discussion above Eq.~\eqref{eq:effective_POVM}.
Now, for any fixed $\bm o$ in Eq.~\eqref{eq:improvable}, the $t$-th term in the sum will contribute $0$ if $E_{o_t}^{\bm o_{<t}}$ is proportional to $I$, as $\Tr(I\cdot P_a)=0$ for all $a\ne0$.
Therefore, there are at most $N_\mr{meas}$ non-zero terms in the sum, and we have a uniform upper bound for each of them. The upper bound on TVD, Eq.~\eqref{eq:upper_tvd}, can thus be strengthened to 
\begin{equation}
        \E_{a\ne 0}\mr{TVD}(p_0,\E_{s=\pm1}[p_{a,s}]) \le N_\mr{meas}\left(\varepsilon_{0}^2\frac{2^n}{4^n-1} + 2\varepsilon_{0}^2\frac{2^n}{4^n-1}\sqrt{f(\varepsilon_{0})}\right)
\end{equation}
Therefore, the same lower bound for $N$ holds for $N_\mr{meas}$. This completes the proof of Theorem~\ref{th:main_SM}.
\end{proof}

\section{Bounds on learning identifiable Pauli noise}\label{app:degeneracy}
\begin{definition}\label{def:coarse}
    Given a partition of ${\sf P}^n\backslash I$ into $\bm B\coleq\{B_1,\cdots,B_K\}$. For any $n$-qubit Pauli channel $\Lambda$, the \textbf{geometrically-averaged Pauli fidelity} according to $\bm B$ are defined to be
    $$
    \lambda_{B}\coleq \mr{sgn}\left(\prod_{b\in B}\lambda_b\right)\cdot\left|\prod_{b\in B} \lambda_b\right|^{\cfrac{1}{|B|}},\quad \forall ~B\in \bm B.
    $$
    Here $\mr{sgn}(\cdot)$ is the sign function.
\end{definition}

\begin{theorem}\label{th:coarse_SM}
    For any partition $\bm B$ in Def.~\ref{def:coarse} such that $C \coleq \max_{B\in\bm B}|B|$, 
    if there exists an classical-memory-assisted scheme that, for any $n$-qubit Pauli channel, outputs an estimator $\widehat\lambda_{B}$ such that $|\widehat\lambda_{B}-\lambda_{B}|\le\varepsilon\le1/6C$ with probability at least $2/3$ for any $B\in\bm B$,
    after making $N$ rounds of measurements, then $N=\Omega({2^n}\varepsilon^{-2}{C^{-2}})$.
\end{theorem}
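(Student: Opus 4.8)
The plan is to re-run the proof of Theorem~\ref{th:main_SM} almost verbatim, replacing the rank-one perturbation $\varepsilon_0\lket{\sigma_a}\lbra{\sigma_a}$ by a perturbation supported on a whole block. Write $\bm B=\{B_1,\dots,B_K\}$ for the partition of ${\sf P}^n\backslash I$, set $C=\max_i|B_i|$, $\varepsilon_0\coleq 2\varepsilon$, and for each $B\in\bm B$, $s\in\{+1,-1\}$ define
$$
\Lambda_{B,s}\coleq\lketbra{\sigma_0}{\sigma_0}+s\varepsilon_0\sum_{b\in B}\lketbra{\sigma_b}{\sigma_b},\qquad
\Lambda_0\coleq\lketbra{\sigma_0}{\sigma_0}.
$$
The Pauli error rates of $\Lambda_{B,s}$ are $p_c=\tfrac1{4^n}\bigl(1+s\varepsilon_0\sum_{b\in B}(-1)^{\expval{c,b}}\bigr)\ge\tfrac1{4^n}(1-C\varepsilon_0)\ge0$, so $\Lambda_{B,s}$ is CPTP as long as $C\varepsilon_0\le1$; this is the first place the hypothesis $\varepsilon\le1/6C$ is used. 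Since $\bm B$ is a partition, $\bar\lambda_{B'}(\Lambda_0)=0$ for every $B'$, while $\bar\lambda_{B}(\Lambda_{B,s})=\mr{sgn}\bigl((s\varepsilon_0)^{|B|}\bigr)\varepsilon_0=s^{|B|}\varepsilon_0$, so in all cases $|\bar\lambda_{B}(\Lambda_{B,s})|=\varepsilon_0=2\varepsilon$. Hence a scheme satisfying the assumption can, after being told the index $B$, test $|\widehat\lambda_{B}|$ against $\varepsilon$ and distinguish $\Lambda_0$ from $\{\Lambda_{B,+},\Lambda_{B,-}\}$ with probability $\ge2/3$. Plugging this into the partially-revealed two-point game of Theorem~\ref{th:main_SM} (the referee draws $B$ uniformly over $\bm B$, $s$ uniformly, sends $N$ copies of $\Lambda_0$ or of $\Lambda_{B,s}$, and reveals $B$ only after all measurements), Le~Cam's method gives $\E_{B}\mr{TVD}\bigl(p_0,\E_{s}p_{B,s}\bigr)\ge\tfrac13$.

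For the upper bound I would repeat Eqs.~\eqref{eq:diff}--\eqref{eq:improvable} with the single-term ratio replaced by $u_{t,s}\coleq\varepsilon_0\sum_{b\in B}\Tr(E_{o_t}^{(\bm o_{<t})}P_b)\Tr(P_b\rho_{B,s}^{(\bm o_{<t})})/\Tr(E_{o_t}^{(\bm o_{<t})})$. Because $|\Tr(EP_b)|\le\Tr(E)$ and $|\Tr(P_b\rho)|\le1$, one has $|u_{t,s}|\le C\varepsilon_0\le\tfrac13$, so the Jensen step, the bound $\log(1-x)\ge-2x$ on $[0,0.79]$, and $e^{-x}\ge1-x$ all carry over, producing the analogue of \eqref{eq:improvable} whose $t$-th summand is $\varepsilon_0^2\bar v^2+2\varepsilon_0|\delta v|$ with $\bar v\coleq\sum_{b\in B}x_b\bar\mu_b$, $\delta v\coleq\sum_{b\in B}x_b\Delta\mu_b$, where $x_b\coleq\Tr(E_{o_t}^{(\bm o_{<t})}P_b)/\Tr(E_{o_t}^{(\bm o_{<t})})$, $\bar\mu_b\coleq\Tr(P_b\bar\rho_B^{(\bm o_{<t})})$, $\Delta\mu_b\coleq\Tr(P_b\Delta_B^{(\bm o_{<t})})$. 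The key combinatorial input is that, because $\bm B$ partitions ${\sf P}^n\backslash I$ with $K\ge(4^n-1)/C$, for any $B$-independent weights $\E_B\sum_{b\in B}(\cdot)=\tfrac1K\sum_{b\ne0}(\cdot)\le\tfrac{C}{4^n-1}\sum_{b\ne0}(\cdot)$; together with the Pauli-twirl identity $\sum_{b\ne0}x_b^2=\tfrac{2^n\Tr(E_{o_t}^2)-\Tr^2(E_{o_t})}{\Tr^2(E_{o_t})}\le2^n$, the inequality $(\sum_{b\in B})^2\le|B|\sum_{b\in B}$, and $|x_b|,|\bar\mu_b|\le1$, this gives $\E_B\bar v^2\le\tfrac{C^2 2^n}{4^n-1}$ and, via Cauchy--Schwarz in $b$ and then in $B$, $\E_B|\delta v|\le\sqrt{\tfrac{C2^n}{4^n-1}}\,\sqrt{\E_B\sum_{b\in B}\Delta\mu_b^2}$.

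What remains is the block version of the recurrence estimates \eqref{eq:recurrence}--\eqref{eq:recurrence_Delta}. From $\rho_{B,s}^{(\bm o_{<t+1})}\propto\mc C_{o_t}^{(\bm o_{<t})}\bigl(I+s\varepsilon_0\sum_{b\in B}\mu_{b,s}^{(\bm o_{<t})}P_b\bigr)$ I get, with $c_{a,b}\coleq\Tr[P_a\mc C_{o_t}^{(\bm o_{<t})}(P_b)]/2^n$,
$$
\mu_{b',s}^{(\bm o_{<t+1})}=\frac{c_{b',0}+s\varepsilon_0\sum_{b\in B}\mu_{b,s}^{(\bm o_{<t})}c_{b',b}}{c_{0,0}+s\varepsilon_0\sum_{b\in B}\mu_{b,s}^{(\bm o_{<t})}c_{0,b}},
$$
whose denominator is $\ge c_{0,0}(1-C\varepsilon_0)\ge\tfrac23 c_{0,0}$ by $|c_{0,b}|\le c_{0,0}$. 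Splitting $\mu_{b',+}'-\mu_{b',-}'$ into the ``shifted-denominator'' and ``$\varepsilon_0$-numerator'' pieces exactly as in the $X,Y$ decomposition of Theorem~\ref{th:main_SM}, but now using $|c_{b',b}|\le c_{0,0}$ and keeping $M\coleq\sum_{b\in B}(|\mu_{b,+}|+|\mu_{b,-}|)$ unevaluated with $M\le 2C$ and $M^2\le2C\sum_{b\in B}(\mu_{b,+}^2+\mu_{b,-}^2)$, squaring, summing $\sum_{b'\in B}$, and applying $\E_B$ with the twirl bound $\E_B\sum_{b'\in B}c_{b',0}^2/c_{0,0}^2\le\tfrac{C2^n}{4^n-1}$, I obtain $\E_B\sum_{b'\in B}\Delta\mu_{b'}^2=O\!\bigl(\varepsilon_0^2 C^3\,\tfrac{2^n}{4^n-1}\bigr)$; the ingredient $\E_B\sum_{b\in B}(\mu_{b,s}^{(\bm o_{<t})})^2=O\!\bigl(\tfrac{C2^n}{4^n-1}\bigr)$, uniform in $t$ and $\bm o$, follows from a parallel induction whose contraction factor $\tfrac{2\varepsilon_0^2C^2}{(1-C\varepsilon_0)^2}\le\tfrac12<1$ for $\varepsilon_0\le1/3C$ (base case $\E_B\sum_{b\in B}\Tr^2(P_b\rho_{\mr{init}})\le2^n/K$). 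Feeding these back gives $\E_B(\varepsilon_0^2\bar v^2+2\varepsilon_0|\delta v|)=O\!\bigl(\varepsilon_0^2 C^2\,2^n/(4^n-1)\bigr)$, hence $\tfrac13\le O\!\bigl(N\varepsilon_0^2 C^2 2^n/(4^n-1)\bigr)$ and $N=\Omega(2^n\varepsilon^{-2}C^{-2})$. The strengthening from $N_\mr{samp}$ to $N_\mr{meas}$ is identical to Theorem~\ref{th:main_SM}: a trivial instrument has $E_{o_t}^{(\bm o_{<t})}\propto I$, so its term vanishes because $\Tr(P_b I)=0$ for $b\ne0$.

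The main obstacle is making the $C$-dependence of the coupled recurrence for $\Delta_B^{(\bm o_{<t})}$ honest. Unlike Theorem~\ref{th:main_SM}, where the mid-circuit perturbation lived on a single Pauli, it now lives on $|B|\le C$ Paulis that $\mc C_{o_t}^{(\bm o_{<t})}$ mixes, so every estimate forces one to trade a sum over $B$ for a factor $|B|$ via Cauchy--Schwarz; keeping those factors from piling up beyond $C^3$ inside $\E_B\sum_{b'\in B}\Delta\mu_{b'}^2$ (so that, after the square root in $\E_B|\delta v|$, the final bound degrades only as $C^{-2}$ and not worse) requires retaining the $\sum_{b\in B}\mu_{b,s}^2$ terms and closing the self-referential induction, which is exactly what pins the admissible range of $\varepsilon$ to $\lesssim1/C$.
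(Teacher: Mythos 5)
Your proposal is correct and follows essentially the same route as the paper's proof: the same Le Cam game with a partially revealed block, block-supported perturbations of the depolarizing channel, the same $X$/$Y$ decomposition of the recurrence for $\mu_{b,\pm}$, the same contraction-based induction on $\E\sum_{b}\mu_{b,\pm}^2$, and the same trivial-instrument argument to pass from $N_{\mathrm{samp}}$ to $N_{\mathrm{meas}}$. The only real difference is normalization: the paper perturbs by $\pm\frac{\varepsilon_0}{|B|}\sum_{b\in B}\lketbra{\sigma_b}{\sigma_b}$ and samples $B$ from $\pi(B)=|B|/(4^n-1)$, so that $\E_{B\sim\pi}\E_{b\in B}$ collapses exactly to $\E_{a\neq 0}$ and the factors of $C$ appear only through $\varepsilon=\varepsilon_0/(2C)$, whereas you use an unnormalized perturbation with uniform block sampling and track the $C$'s explicitly via Cauchy--Schwarz and $K\ge(4^n-1)/C$ — both yield the same $\Omega(2^n\varepsilon^{-2}C^{-2})$ bound.
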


\begin{proof}
    The proof is similar to that of Theorem~\ref{th:main_SM}. Here, we need to define a different set of Pauli channels, as follows
    \begin{align}
        \Lambda_{B,\pm}&\coleq \lketbra{\sigma_0}{\sigma_0}\pm\varepsilon_{0}\E_{b\in B}\lketbra{\sigma_b}{\sigma_b} \\&= \lketbra{\sigma_0}{\sigma_0}\pm\frac{\varepsilon_{0}}{|B|}\sum_{b\in B}\lketbra{\sigma_b}{\sigma_b},\quad \forall~B\in\bm B.\\
        \Lambda_{0}&\coleq \lketbra{\sigma_0}{\sigma_0}.
    \end{align}
    Here we need $\varepsilon_0\le1/3$. To see $\Lambda_{B,\pm}$ is indeed CPTP, check the associated Pauli error rates:
    \begin{align}
        p_a = \frac{1}{4^n}\sum_{b}(-1)^\expval{a,b}\lambda_b
        =\frac{1}{4^n}\left(1+\sum_{b\in B}(-1)^\expval{a,b}\lambda_b\right)
        \ge \frac{1}{4^n}\left( 1 - |B|\cdot\frac{\varepsilon_0}{|B|}\right)\ge0.
    \end{align}
    Furthermore, define the following distribution over $\bm B$: 
    \begin{equation}
        \pi(B)={|B|}/({4^n-1}).
    \end{equation} 
    Now we are ready to define the partially-revealed hypothesis testing task: The referee samples $B$ according to $\pi$ and $s=\pm1$ with equal probability. Then, sends either $N$ copies of $\Lambda_0$ or $N$ copies of $\Lambda_{B,s}$ to the player. After the player completes their measurement, the referee reveals $B$. The player wins if they can guess which action has been taken. Similarly as before, if there exists a scheme satisfying the assumption of Theorem~\ref{th:coarse_SM} with $\varepsilon \coleq \varepsilon_0/(2C)$, the player wins with high probability by querying $\lambda_B$, and we have the following TVD lower bounds
    \begin{equation}
        \E_{B\sim\pi}\mr{TVD}(p_0(\bm o),\E_{s}p_{B,s}(\bm o)) \ge 1/3.
    \end{equation}
    Now, calculate the probability distribution difference,
    \begin{align}
        p_0(\bm o) - \E_s p_{B,s}(\bm o) &= p_0(\bm o)\left(1-\frac{\E_sp_{B,s}(\bm o)}{p_0(\bm o)}\right)\\
        &=p_0(\bm o)\left(1-\frac{\E_sp_{B,s}(\bm o)}{p_0(\bm o)}\right)\\
        &=p_0(\bm o)\left(1-\E_s\prod_{t=1}^N\left(1+s\varepsilon_{0}
        \E_{b\in B}\frac{\Tr(E_{o_t}^{(\bm o_{<t})}P_b)\Tr(P_b\rho_{B,s}^{(\bm o_{<t})})}{\Tr(E_{o_t}^{(\bm o_{<t})})}\right)\right).
    \end{align}
    Define
    \begin{equation}
    \bar\rho_B^{(\bm o_{<t})} = \frac12(\rho_{B,+}^{(\bm o_{<t})} + \rho_{B,-}^{(\bm o_{<t})}),\quad 
    \Delta_B^{(\bm o_{<t})} = \frac12(\rho_{B,+}^{(\bm o_{<t})} - \rho_{B,-}^{(\bm o_{<t})}).
    \end{equation}
    Focus on lower bounding the following term:
    \begin{align}
    &\E_s\prod_{t=1}^N\left(1+s\varepsilon_{0}
    \E_{b\in B}\frac{\Tr(E_{o_t}^{(\bm o_{<t})}P_b)\Tr(P_b\rho_{B,s}^{(\bm o_{<t})})}{\Tr(E_{o_t}^{(\bm o_{<t})})}\right)\\
    \ge~&\exp\left(\frac12\sum_{t=1}^N\sum_{s=\pm1}\log\left(1+s\varepsilon_{0}
    \E_{b\in B}\frac{\Tr(E_{o_t}^{(\bm o_{<t})}P_b)\Tr(P_b\rho_{B,s}^{(\bm o_{<t})})}{\Tr(E_{o_t}^{(\bm o_{<t})})}\right)\right)\\
    \ge~&\exp\left(\frac12\sum_{t=1}^N\log\left(1-\varepsilon_{0}^2
    \left(\E_{b\in B}\frac{\Tr(E_{o_t}^{(\bm o_{<t})}P_b)\Tr(P_b\bar{\rho}_{B}^{(\bm o_{<t})})}{\Tr(E_{o_t}^{(\bm o_{<t})})}\right)^2
    -2\varepsilon_{0}\left|\E_{b\in B}\frac{\Tr(E_{o_t}^{(\bm o_{<t})}P_b)\Tr(P_b\Delta_{B}^{(\bm o_{<t})})}{\Tr(E_{o_t}^{(\bm o_{<t})})}\right|\right)\right)\\
    \ge~&\exp\left(\frac12\sum_{t=1}^N\log\left(1-\E_{b\in B}\left(\varepsilon_{0}^2
    \frac{\Tr^2(E_{o_t}^{(\bm o_{<t})}P_b)\Tr^2(P_b\bar{\rho}_{B}^{(\bm o_{<t})})}{\Tr^2(E_{o_t}^{(\bm o_{<t})})}+2\varepsilon_{0}\left|\frac{\Tr(E_{o_t}^{(\bm o_{<t})}P_b)\Tr(P_b\Delta_{B}^{(\bm o_{<t})})}{\Tr(E_{o_t}^{(\bm o_{<t})})}\right|\right)\right)\right)\\
    \ge~& 1 - \E_{b\in B}\sum_{t=1}^N\left(\varepsilon_{0}^2
    \frac{\Tr^2(E_{o_t}^{(\bm o_{<t})}P_b)\Tr^2(P_b\bar{\rho}_{B}^{(\bm o_{<t})})}{\Tr^2(E_{o_t}^{(\bm o_{<t})})}+2\varepsilon_{0}\left|\frac{\Tr(E_{o_t}^{(\bm o_{<t})}P_b)\Tr(P_b\Delta_{B}^{(\bm o_{<t})})}{\Tr(E_{o_t}^{(\bm o_{<t})})}\right|\right).
    \end{align}
    This gives us the following average TVD bound
    \begin{align}
        &\E_{B\sim \pi}\mr{TVD}(p_0(\bm o), \mbb E_s p_{B,s}(\bm o)) \\
        \le~& \sum_{\bm o}p_0(\bm o)\sum_{t=1}^N\E_{B\sim\pi}\E_{b\in B} \left(\varepsilon_{0}^2
        \frac{\Tr^2(E_{o_t}^{(\bm o_{<t})}P_b)\Tr^2(P_b\bar{\rho}_{B}^{(\bm o_{<t})})}{\Tr^2(E_{o_t}^{(\bm o_{<t})})}+2\varepsilon_{0}\left|\frac{\Tr(E_{o_t}^{(\bm o_{<t})}P_b)\Tr(P_b\Delta_{B}^{(\bm o_{<t})})}{\Tr(E_{o_t}^{(\bm o_{<t})})}\right|\right).
    \end{align}
    For the first term, note that
    \begin{align}
        \E_{B\sim\pi}\E_{b\in B}\frac{\Tr^2(E_{o_t}^{(\bm o_{<t})}P_b)\Tr^2(P_b\bar{\rho}_{B}^{(\bm o_{<t})})}{\Tr^2(E_{o_t}^{(\bm o_{<t})})}&\le \E_{B\sim\pi}\E_{b\in B}\frac{\Tr^2(E_{o_t}^{(\bm o_{<t})}P_b)}{\Tr^2(E_{o_t}^{(\bm o_{<t})})}\\
        &= \sum_{B\in\bm B}\frac{|B|}{4^n-1}\sum_{b\in B}\frac{1}{|B|} \frac{\Tr^2(E_{o_t}^{(\bm o_{<t})}P_b)}{\Tr^2(E_{o_t}^{(\bm o_{<t})})}\\
        &= \E_{a\ne 0}\frac{\Tr^2(E_{o_t}^{(\bm o_{<t})}P_a)}{\Tr^2(E_{o_t}^{(\bm o_{<t})})}\\
        &\le \frac{2^n}{4^n-1}.
    \end{align}
    For the second term,
    \begin{align}
        &\E_{B\sim\pi}\E_{b\in B}\left|\frac{\Tr^2(E_{o_t}^{(\bm o_{<t})}P_b)\Tr^2(P_b\Delta_{B}^{(\bm o_{<t})})}{\Tr^2(E_{o_t}^{(\bm o_{<t})})}\right| \\
        \le~& \sqrt{\E_{B\sim\pi}\E_{b\in B}\frac{\Tr^2(E_{o_t}^{(\bm o_{<t})}P_b)}{\Tr^2(E_{o_t}^{(\bm o_{<t})})}}\sqrt{\E_{B\sim\pi}\E_{b\in B}\Tr^2(P_b\Delta_B^{(\bm o_{<t})})}\\
        \le~&\sqrt{\frac{2^n}{4^n-1}}\sqrt{\E_{B\sim\pi}\E_{b\in B}\Tr^2(P_b\Delta_B^{(\bm o_{<t})})}
    \end{align}
    Now we bound the expression inside the second square root. Define $\mu_{b,\pm}^{(\bm o_{<t})}\coleq \Tr(P_b\rho_{B,\pm}^{(\bm o_{<t})})$ and $c^{(\bm o_{\le t})}_{a,b}\coleq\Tr[P_a\mc C_{o_t}^{(\bm o_{<t})}(P_b)]/2^n$. Note that the $B$ dependence is implicit in $\mu_{b,\pm}$ since every $b$ belongs to one and only one $B$. First notice the following recurrence relation for $\mu_{b,\pm}$,
    \begin{equation}
        \mu_{b,\pm}^{(\bm o_{<t+1})}\coleq \frac{c^{(\bm o_{\le t})}_{b,0}\pm \varepsilon_{0}\E_{b'\in B}\mu_{b',\pm}^{(\bm o_{<t})}c^{(\bm o_{\le t})}_{b,b'}}{c^{(\bm o_{\le t})}_{0,0}\pm \varepsilon_{0}\E_{b'\in B}\mu_{b',\pm}^{(\bm o_{<t})}c^{(\bm o_{\le t})}_{0,b'}} 
    \end{equation}
    To simplify notation, for a fixed $0\le t\le N-1$, we use $\mu,\mu'$ to denote $\mu^{(\bm o_{<t})},\mu^{(\bm o_{<t+1})}$, respectively, and drop the superscript for $c$. We first study the difference between plus and minus term
    \begin{align}
        &\left|\mu'_{b,+} - \mu'_{b,-}\right|\\
        \le~&\left|\frac{c_{b,0}}{c_{0,0}+\varepsilon_{0}\E_{b'\in B}\mu_{b',+}c_{0,b'}}-\frac{c_{b,0}}{c_{0,0}-\varepsilon_{0}\E_{b'\in B}\mu_{b',-}c_{0,b'}}\right| \\&\quad\quad\quad+ 
        \varepsilon_{0}\left|\frac{\E_{b'\in B}\mu_{b',+}c_{b,b'}}{c_{0,0}+\varepsilon_{0}\E_{b'\in B}\mu_{b',+}c_{0,b'}}+\frac{\E_{b'\in B}\mu_{b',-}c_{b,b'}}{c_{0,0}-\varepsilon_{0}\E_{b'\in B}\mu_{b',-}c_{0,b'}}\right|\\
        \le~& \frac{2\varepsilon_{0}}{1-\varepsilon_{0}^2}\frac{|c_{b,0}|}{c_{0,0}} + \frac{\varepsilon_{0}}{1-\varepsilon_{0}}\E_{b'\in B} (|\mu_{b',+}|+|\mu_{b',-}|).
    \end{align}
    Here we use $|c_{a,b}|\le c_{0,0}$ as proved in the last section. Therefore,
    \begin{align}
        &\E_{B\sim\pi}\E_{b\in B}\Tr^2(P_b\Delta_B^{(\bm o_{<t})}) \\
        \equiv~& \E_{B\sim\pi}\E_{b\in B}\left(\frac{\mu'_{b,+}-\mu'_{b,-}}2\right)^2\\
        \le~& \frac12\left(\frac{2\varepsilon_{0}}{1-\varepsilon_{0}^2}\right)^2\E_{B\sim\pi}\E_{b\in B}\frac{c_{b,0}^2}{c_{0,0}^2} + \frac12\left(\frac{\varepsilon_{0}}{1-\varepsilon_{0}}\right)^2\E_{B\sim\pi}\left(\E_{b'\in B} (|\mu_{b',+}|+|\mu_{b',-}|)\right)^2\\
        \le~& \frac12\left(\frac{2\varepsilon_{0}}{1-\varepsilon_{0}^2}\right)^2\frac{2^n}{4^n-1} + \left(\frac{\varepsilon_{0}}{1-\varepsilon_{0}}\right)^2 \E_{B\sim\pi}\E_{b'\in B}(\mu_{b',+}^2+\mu_{b',-}^2)\label{eq:Delta_exp_2}.
    \end{align}
    Note that the second term in the third line has no $b$ dependence, so we omit $\E_{b\in B}$. To bound the second term in the last line, again by the recurrence relation
    \begin{align}
        \E_{B\sim\pi}\E_{b\in B}(\mu'_{b,\pm})^2 &= \E_{B\sim\pi}\E_{b\in B}\frac{(c_{b,0}\pm\varepsilon_{0}\E_{b'\in B}\mu_{b',\pm}c_{b,b'})^2}{(c_{0,0}\pm\varepsilon_{0}\E_{b'\in B}\mu_{b',\pm}c_{0,b'})^2}\\
        &\le \E_{B\sim\pi}\E_{b\in B}\frac{(c_{b,0}\pm\varepsilon_{0}\E_{b'\in B}\mu_{b',\pm}c_{b,b'})^2}{(1-\varepsilon_{0})^2c_{0,0}^2}\\
        &\le \E_{B\sim\pi}\E_{b\in B}\frac{2c^2_{b,0}+2\varepsilon_{0}^2(\E_{b'\in B}\mu_{b',\pm}c_{b,b'})^2}{(1-\varepsilon_{0})^2c_{0,0}^2}\\
        &\le \frac{2}{(1-\varepsilon_{0})^2}\frac{2^n}{4^n-1} + \frac{2\varepsilon_{0}^2}{(1-\varepsilon_{0})^2}\E_{B\sim\pi}\E_{b'\in B}\mu_{b',\pm}^2.
    \end{align}
    The initial condition is
    \begin{align}
        \E_{B\sim\pi}\E_{b'\in B}\mu_{b',\pm}^{(\mr{init})^2}
        \equiv\E_{B\sim\pi}\E_{b'\in B}\Tr^2(P_b\rho_0) \le \frac{2^n}{4^n-1}<\frac{2}{(1-\varepsilon_{0})^2}\frac{2^n}{4^n-1}.
    \end{align}
    By induction, as long as $2\varepsilon_0^2/(1-\varepsilon_0)^2\le1$,
    \begin{equation}
        \E_{B\sim\pi}\E_{b'\in B}\mu_{b',\pm}^2 \le \frac{1}{1-\frac{2\varepsilon_{0}^2}{(1-\varepsilon_{0})^2}}\frac{2}{(1-\varepsilon_{0})^2}\frac{2^n}{4^n-1}\le\frac{2}{1-2\varepsilon_{0}-\varepsilon_{0}^2}\frac{2^n}{4^n-1}.
    \end{equation}
    Substitute this into Eq.~\eqref{eq:Delta_exp_2},
    \begin{align}
        \E_{B\sim\pi}\E_{b'\in B}\Tr^2(P_b\Delta_{B}^{(\bm o_{<t})}) &\le \frac12\left(\frac{2}{1-\varepsilon_{0}^2}\right)^2\frac{2^n\varepsilon_{0}^2}{4^n-1} + \left(\frac{1}{1-\varepsilon_{0}}\right)^2 \frac{4}{1-2\varepsilon_{0}-\varepsilon_{0}^2}\frac{2^n\varepsilon_{0}^2}{4^n-1}\\
        &\eqcol f(\varepsilon_{0})\frac{2^n\varepsilon_{0}^2}{4^n-1}. 
    \end{align}
    Here $f(\varepsilon_{0})<44$ for $\varepsilon_{0}\le 1/3$. Putting this back to the TVD bound,
    \begin{equation}
        \E_{B\sim\pi}\mr{TVD}(p_0(\bm o),\E_{s}p_{B,s}(\bm o))\le N\left( (1+2\sqrt{f(\varepsilon_{0})})\frac{2^n\varepsilon_{0}^2}{4^n-1}\right).
    \end{equation}
    Thus,
    \begin{equation}
        N \ge \frac{1}{3(1+2\sqrt{f(\varepsilon_{0})})}\frac{4^n-1}{2^n\varepsilon_{0}^2} = \frac{1}{12(1+2\sqrt{f(2C\varepsilon)})}\frac{4^n-1}{2^n\varepsilon^2C^2} \ge 0.005 \cdot\frac{4^n-1}{2^n\varepsilon^2C^2}= \Omega(2^n\varepsilon^{-2}C^{-2}).
    \end{equation}
    Based on the same argument as in the previous section, the same lower bounds hold for $N_\mr{meas}$. This completes the proof of Theorem~\ref{th:coarse_SM}.
\end{proof}

\section{Numerical comparison of upper and lower bounds}\label{app:numeric}
In this section, we provide more details of the comparison given in main text Fig.~2. We will first review the entanglement-assisted Pauli channel learning protocol proposed in~\cite{chen2022quantum} which assumed perfect Bell state preparation and Bell measurement. For practical consideration, we will adapt the protocol to noisy Bell state/measurement and derive the sample complexity upper bound. Finally, we will recall the exact form of the ancilla-free lower bound from~\cite{chen2022quantum}, which is to be compared with our improved lower bound from Theorem~\ref{th:main_SM}.

Given an $n$-qubit main system $S$ and an $n$-qubit ancillary system $A$. The Bell states on $SA$ are defined as
\begin{equation}
    \ket{\Psi_a}\coleq P_a^S\otimes I^A \ket{\Psi_0}, \quad\forall a\in{\sf P}^{n},
\end{equation}
where $\ket{\Psi_0} = \ket{\psi_+}^{\otimes n} \equiv \left(\frac1{\sqrt{2}}\left(\ket{00}+\ket{11}\right)\right)^{\otimes n}$ is the $n$-fold tensor product a $2$-qubit Bell pair between the system and ancilla. Use the fact that $\ketbra{\psi_+}{\psi_+} = \frac{1}{4}\sum_{a\in {\sf P}^1}P_a\otimes P_a^{\sf T}$, the density matrix of an $n$-qubit Bell state can be expressed as
\begin{equation}
\begin{aligned}
        \ketbra{\Psi_a}{\Psi_a} &= (P_a\otimes I) \ketbra{\Psi_0}{\Psi_0}(P_a\otimes I)\\ 
        &= \frac{1}{4^n}\sum_{b\in{\sf P}^n}P_aP_bP_a\otimes P_b^{\sf T}\\
        &= \frac{1}{4^n}\sum_{b\in{\sf P}^n}(-1)^{\expval{a,b}}P_b\otimes P_b^{\sf T}.
\end{aligned}
\end{equation}
Here $\expval{a,b}=\mathds{1}[P_a~\text{commutes with}~P_b]$ where $\mathds 1$ is the indicator function. One can verify that all the $4^n$ Bell states form an orthogonal basis, the projective measurement onto which is known as the Bell measurement. The basic step of entanglement-assisted Pauli channel learning protocol studied in~\cite{chen2022quantum} includes preparing $\ket{\Psi_0}$, applying $\Lambda$ on the main system, and measuring in the Bell basis $\{\ketbra{\Psi_a}{\Psi_a}\}_a$. When the Bell states/measurement are perfect, the outcome distribution of this protocol is exactly the Pauli error rate of $\Lambda$. To analyze the effect of noisy Bell states/measurement, we assume i.i.d. single-qubit depolarizing noise happens on each qubit during the protocol, \textit{i.e.},
\begin{equation}
    \mc E_p(\rho) = (1-p)\rho + p\Tr[\rho]I/2.
\end{equation}
The fidelity of a single Bell pair going through the depolarizing noises is
\begin{equation}\label{eq:fid_vs_p}
\begin{aligned}
    F_\mr{Bell} &= \expval{\psi_+|\mc E_p^{\otimes 2}(\ketbra{\psi_+}{\psi_+})|\psi_+} = \frac{1+3(1-p)^2}{4}.
\end{aligned}
\end{equation}
With this noise model, the measurement outcome probability can then be calculated as
\begin{equation}
    \begin{aligned}
    \widetilde p(a)&= \expval{\Psi_a|(\Lambda\otimes\mathcal I)\circ\mc E_p^{\otimes 2n}\left(\ketbra{\Psi_0}{\Psi_0}\right)|\Psi_a}\\
    &= \expval{\Psi_a|(\Lambda\otimes\mathcal I)\left(
    \frac{1}{4^n}\sum_{b\in{\sf P}^n}(1-p)^{2|b|}P_b\otimes P_b^{\sf T}
    \right)|\Psi_a}\\
    &= \expval{\Psi_a|\left(
    \frac{1}{4^n}\sum_{b\in{\sf P}^n}(1-p)^{2|b|}\lambda_b P_b\otimes P_b^{\sf T}
    \right)|\Psi_a}\\
    &= \frac{1}{4^{2n}}\sum_{b,c\in{\sf P}^n}(-1)^\expval{a,c}(1-p)^{2|b|}\lambda_b\Tr[P_bP_c]\Tr[P_b^{\sf T}P_c^{\sf T}]\\
    &= \frac{1}{4^n}\sum_{b\in{\sf P}^n}(-1)^\expval{a,b}(1-p)^{2|b|}\lambda_b,
    \end{aligned}
\end{equation}
where $|a|$ denotes the Pauli weight of $P_a$ (\textit{i.e.}, the number of non-identity single-qubit Pauli in $P_a$). If we set our estimator to be $\hat\lambda_b[a]\coleq (1-p)^{-2|b|}(-1)^\expval{a,b}$, then
\begin{equation}
    \mbb E_{a\sim {\tilde p}}\hat\lambda_b[a] = (1-p)^{-2|b|}\sum_{a\in{\sf P}^n}(-1)^\expval{a,b}\widetilde p(a) = \lambda_b,
\end{equation}
using the Walsh-Hadamard transform as in Eq.~\eqref{eq:walsh_hadamard}. Thus $\hat\lambda_b[a]$ is an unbiased estimator for $\lambda_b$. Plus, $|\hat\lambda_b[a]|\le(1-p)^{-2|b|}$. According to the Hoeffding's inequality~\cite{hoeffding1994probability}, by independently sampling 
\begin{equation}\label{eq:noisy_bell_upper}
    N = 2\varepsilon^{-2}(1-p)^{-4|b|}\log(2/\delta)
\end{equation}
measurement outcomes $\{a^{(k)}\}_{k=1}^N$ from $\widetilde{p}$ and averaging the estimators $\hat\lambda_b\coleq\hat\lambda_b[a^{(1)},\cdots,a^{(N)}]=\frac1N\sum_{k=1}^N\hat\lambda_b[a^{(k)}]$, we have $|\hat\lambda_b-\lambda_b|\le \varepsilon$ with probability at least $1-\delta$. In Fig.~2, we plot the upper bound of Eq.~\eqref{eq:noisy_bell_upper} by choosing $|b|=n$, which are the most difficult Pauli eigenvalues to learn, $\delta=1/3,~\varepsilon=0.1$, and choosing different $p$ depending on the desired Bell pair fidelity according to Eq.~\eqref{eq:fid_vs_p}.

\medskip

The best previously known ancilla-free lower bound for this task is~\cite[Theorem 6]{chen2022quantum} which allows channel concatenation and adaptive control but not mid-circuit measurement. The theorem says that, to learn all $\lambda_b$ to an additive precision $1/2$ with success probability at least $2/3$, the following number of measurements are necessary
\begin{equation}\label{eq:aflower}
    N \ge \frac{1}{6}(2^n-1)^{1/3}.
\end{equation}
Note that, this bound fixes the precision parameter to be $\varepsilon=1/2$, and it is not obvious how to generalize the method there to obtain an $\varepsilon$-dependent bound. Since a lower bound for one precision ($\varepsilon=1/2$) is also trivially a lower bound for a better precision (say, $\varepsilon=0.1$), we use Eq.~\eqref{eq:aflower} in Fig. 2.

\medskip

Finally, for our improved lower bound, we plot
\begin{equation}
    N \ge 0.01 \frac{4^n-1}{2^n\varepsilon^2},
\end{equation}
which can be obtained from Eq.~\eqref{eq:bound_with_f} given that $\varepsilon=0.1$. 

\end{document}